\newtheorem{problem}{\textbf{Problem}}
\newtheorem{definition}{\textbf{Definition}}
\newtheorem{theorem}{\rm\textbf{Theorem}}
\newtheorem{remark}{\rm\textbf{Remark}}
\newtheorem{proposition}{\rm\textbf{Proposition}}
\begin{document}
%
% paper title
% Titles are generally capitalized except for words such as a, an, and, as,
% at, but, by, for, in, nor, of, on, or, the, to and up, which are usually
% not capitalized unless they are the first or last word of the title.
% Linebreaks \\ can be used within to get better formatting as desired.
% Do not put math or special symbols in the title.
\title{Taylor-Lagrange Control for Safety-Critical Systems}
%
%
% author names and IEEE memberships
% note positions of commas and nonbreaking spaces ( ~ ) LaTeX will not break
% a structure at a ~ so this keeps an author's name from being broken across
% two lines.
% use \thanks{} to gain access to the first footnote area
% a separate \thanks must be used for each paragraph as LaTeX2e's \thanks
% was not built to handle multiple paragraphs
%

\author{Wei Xiao and Anni Li
\thanks{W. Xiao is with the Robotics Engineering Department at Worcester Polytechnic Institute, Worcester, MA, 01609, USA
	\texttt{{\small wxiao3@wpi.edu}}}% <-this % stops a space
\thanks{A. Li is with the Electrical and Computer Engineering Department at The University of North Carolina at Charlotte, Charlotte, NC, 28223, USA
	\texttt{{\small ali20@charlotte.edu}}}
\thanks{Manuscript received December 8, 2025.}}

\maketitle

% As a general rule, do not put math, special symbols or citations
% in the abstract or keywords.
\begin{abstract}
% This paper proposes a new Taylor-Lagrange Control (TLC) method for ensuring the safety and stability of nonlinear control systems through Taylor's theorem with Lagrange remainder. Specifically, we expand a safety or stability function with respect to time and along the system dynamics space with the Lie derivative using Taylor's theorem. This allows us to eventually make the control input of the system show up in the Taylor's expansion with an order equivalent to the relative degree of the function. We show that the proposed TLC is a necessary and sufficient condition for the safety of the system. The proposed TLC naturally extends to systems and constraints of arbitrary relative degree, and is shown to have some overlap with existing Control Barrier Function (CBF) and Control Lyapunov Function (CLF) methods. Further, the TLC extends the CBF and CLF methods to the imaginary value domain, especially for high order cases. Compared to High-Order CBFs (HOCBFs), the TLC is less restrictive as it does not require the forward invariance of the intersection of a set of safe sets while HOCBFs do. We employ TLC to reformulate a constrained optimal control problem as a sequence of quadratic programs with a zero-order hold implementation method, and show the safety of the zero-order hold TLC with an event-triggered control method in addressing the inter-sampling effect. Finally, we illustrate the effectiveness of the proposed TLC method on an adaptive cruise control system and on a robot control problem, and compare it with existing CBF methods.

This paper proposes a novel Taylor-Lagrange Control (TLC) method for nonlinear control systems to ensure the safety and stability through Taylor's theorem with Lagrange remainder. To achieve this, we expand a safety or stability function with respect to time along the system dynamics using the Lie derivative and Taylor's theorem. 
This expansion enables the control input to appear in the Taylor series at an order equivalent to the relative degree of the function. We show that the proposed TLC provides necessary and sufficient conditions for system safety and is applicable to systems and constraints of arbitrary relative degree. The TLC exhibits connections with existing Control Barrier Function (CBF) and Control Lyapunov Function (CLF) methods, and it further extends the CBF and CLF methods to the complex domain, especially for higher order cases. Compared to High-Order CBFs (HOCBFs), TLC is less restrictive as it does not require forward invariance of the intersection of a set of safe sets while HOCBFs do. We employ TLC to reformulate a constrained optimal control problem as a sequence of quadratic programs with a zero-order hold implementation method, and demonstrate the safety of zero-order hold TLC using an event-triggered control method to address inter-sampling effects. Finally, we illustrate the effectiveness of the proposed TLC method through an adaptive cruise control system and a robot control problem, and compare it with existing CBF methods.
\end{abstract}

% Note that keywords are not normally used for peerreview papers.
\begin{IEEEkeywords}
Taylor-Lagrange Control, System Safety, System Stability.
\end{IEEEkeywords}

% For peer review papers, you can put extra information on the cover
% page as needed:
% \ifCLASSOPTIONpeerreview
% \begin{center} \bfseries EDICS Category: 3-BBND \end{center}
% \fi
%
% For peerreview papers, this IEEEtran command inserts a page break and
% creates the second title. It will be ignored for other modes.
\IEEEpeerreviewmaketitle

\section{Introduction}
\label{sec:intro}
\IEEEPARstart{W}{ith} the rapid growth of artificial intelligence and autonomy, 
the problem of stabilizing dynamical systems while optimizing costs and satisfying safety constraints has received significant attention in recent years. Traditional methods like optimal control \cite{Bryson1969} \cite{kirk2004optimal} and dynamic programming \cite{bellman1966dynamic} \cite{Denardo2003}  are primarily designed to linear systems and constraints, limiting their applicability to safety-critical nonlinear systems. Model Predictive Control (MPC) methods \cite{garcia1989model} \cite{Rawlings2018} have been widely applied to receding horizon control, but they are generally computationally expensive, especially for nonlinear systems. Although linearization or simplification can be adopted to improve MPC computational efficiency, such approximations may compromise safety guarantees. Reachability analysis \cite{althoff2014online} \cite{asarin2003reachability} is also widely used to verify system safety, but it incurs heavy computation load. In order to address the computational challenge, barrier-based methods have received increasing attention for nonlinear systems.

Barrier functions (BFs) are used primarily in optimizations with inequality constraints \cite{Boyd2004} by incorporating the reciprocal form of the constraints into the cost function, and they are also used in learning systems to improve safety, such as in safe reinforcement learning \cite{cheng2019end}. However, this method cannot strictly guarantee system safety as the safety is taken as part of the cost function or reward, and thus its functionality is similar to a soft constraint.  BFs are also used as Lyapunov-like functions \cite{Wieland2007}, and they have been employed in verification and control for set invariance \cite{Tee2009}\cite{Aubin2009}\cite{Prajna2004}\cite{Prajna2007}\cite{Wisniewski2013}. 

\begin{figure}[t]
	% \vspace{-4mm}
	\centering
	\includegraphics[scale=0.5]{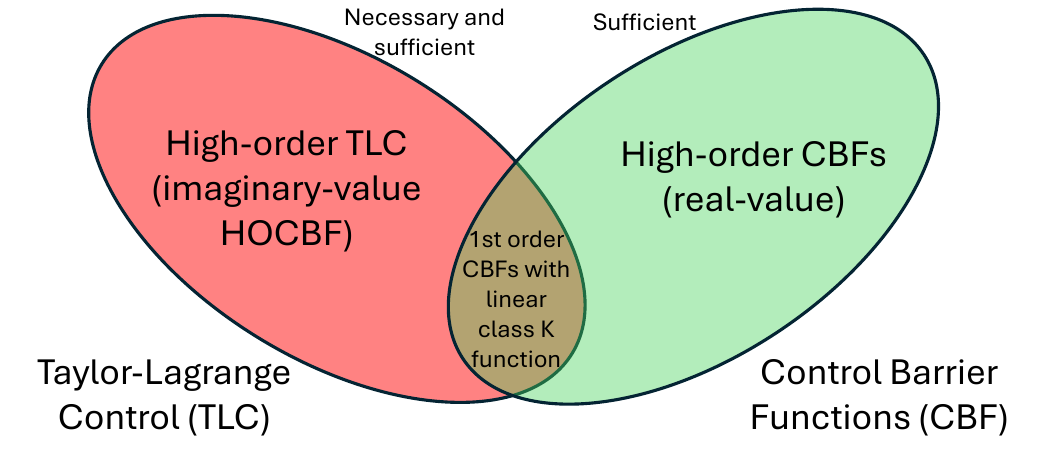}
	% \vspace{-4mm}
	\caption{Comparison between the propose TLC method and the CBF method. The proposed TLC and CBF coincide when the relative degree of the safety constraint is one, while the TLC extends HOCBF to the imaginary value domain in high-order cases. The existence of a CBF is only a sufficient condition for system safety, while the TLC is necessary and sufficient.}	
	\label{fig:compare}
	% \vspace{-4mm}
\end{figure}

Control Barrier Functions (CBFs) are the extensions of BFs for control systems. CBFs are primarily based on the Nagumo's theorem \cite{Nagumo1942berDL} showing that if the safety function is initially non-negative and the derivative of the safety function is always non-negative at the boundary of safe sets, then the safety function remains non-negative, i.e., the safety can be guaranteed. For CBFs to be used in safe control synthesis, class $\mathcal{K}$ functions are employed in reciprocal CBFs \cite{Ames2017} and zeroing CBFs \cite{Glotfelter2017} to allow the safety function to decrease when the state is far from the boundary. {Both reciprocal CBFs and zeroing CBFs are fundamentally the same, and they are conservative due to the class $\mathcal{K}$ functions. In other words, the existence of a CBF is only a sufficient condition for system safety. There are different variations for CBFs. Stochastic CBFs \cite{clark2021control} have been proposed for stochastic systems, and finite-time convergence \cite{srinivasan2018control} and robust CBFs \cite{cohen2022robust} are for systems that initially violate the safety constraint. Adaptive CBFs \cite{taylor2020adaptive} \cite{xiao2021adaptive} are also introduced to deal with system uncertainty and noise. In order to enforce safety for systems and constraints with high relative degree, exponential CBFs (ECBFs) \cite{Nguyen2016} and high-order CBFs (HOCBFs) \cite{xiao2021high} are also proposed. However, the existence of ECBFs or HOCBFs implies the forward invariance of the intersection of a set of safe sets, which is very restrictive and may potentially limit the system performance.}

In this paper, we propose a novel Taylor-Lagrange Control (TLC) method for system safety and stability using the Taylor's theorem with Lagrange remainder. The TLC method expands a safety function with respect to time and evaluates its time derivative along the system dynamics. When the order of the Taylor expansion equals the relative degree of the safety constraint, we have the control input appears explicitly in the expansion. This enables direct transformation of state constraints into control constraints without using Nagumo's theorem, which is crucial for control synthesis. {The TLC can also be used for system stability in a similar way to system safety.} Finally, we propose to employ TLC to transform constrained optimal control problems into a sequence of Quadratic Programs (QPs) with Zero-Order Hold (ZOH) method, and extend TLC with an event-triggered mechanism to address inter-sampling effects and guarantee continuous-time safety.
% In order to ensure safety in continuous time during implementation (i.e., the inter-sampling effect), we further propose an event-triggered TLC method.

As illustrated in Fig. \ref{fig:compare}, unlike CBFs, which provide only sufficient conditions for safety due to class $\mathcal{K}$ functions, TLC provides necessary and sufficient conditions for system safety.
% Compared to CBFs, as shown in Fig. \ref{fig:compare}, the existence of a TLC is a necessary and sufficient condition for system safety, while CBFs are only a sufficient condition due to class $\mathcal{K}$ functions. 
The proposed TLC exhibits connections with CBFs and HOCBFs but provides important extensions. Specifically, TLC pushes the boundary of HOCBFs from the real-value domain to the complex domain, and the equivalent HOCBF form of a TLC may involve complex values in the class $\mathcal{K}$ functions.
% The proposed TLC also has some overlap with CBFs, as well as HOCBFs. 
% The TLC pushes the boundary of HOCBFs from real-value domain to the imaginary-value domain. Specifically, when comparing a TLC with a HOCBF, we found that the equivalent HOCBF form of the TLC may have imaginary values on the class $\mathcal{K}$ functions.
Furthermore, while HOCBFs require forward invariance of the intersection of multiple safe sets, TLC requires only forward invariance of the original safe set, making it less restrictive. When applied to stability analysis, TLC similarly relates to Control Lyapunov Functions (CLFs).
% Furthermore, the existence of HOCBFs implies the forward invariance of the intersection of a set of safe sets, while the proposed TLC only implies the forward invariance of the original safe set. This shows that the TLC is less restrictive than the HOCBF method. 
% When the TLC is applied to system stability, it also has some overlap with Control Lyapunov Functions (CLFs).

The contributions of this paper are as follows:
\begin{enumerate}
    \item We propose a novel Taylor-Lagrange Control method for system safety and stability, and demonstrate its connections with existing CBF, CLF and HOCBF methods under certain conditions.

    \item We provide rigorous proofs of the necessity and sufficiency of TLC for system safety.

    \item We employ TLC to transform constrained optimal control problems into a sequence of QPs, and introduce an event-trigged TLC to address the inter-sampling effect.
\end{enumerate}

The remainder of the paper is organized as follows. In Section \ref{sec:prelim}, we provide preliminaries on Taylor's theorem with Lagrange remainder, CBFs, CLFs and HOCBF. We propose the Taylor-Lagrange certificate for nonlinear systems and Taylor-Lagrange control for control systems in Section \ref{sec:tlc}, and propose Taylor-Lagrange stability for control systems in Sec. \ref{sec:stability}. We formulate optimal control problems that are solved by the Taylor-Lagrange control method
in Section \ref{sec:oc}. An event-triggered Taylor-Lagrange control method is also proposed in Section \ref{sec:oc} to address the inter-sampling effect. The case studies and simulation results are shown in Section \ref{sec:case}. We conclude with final remarks and directions for future work in Section \ref{sec:conclusion}.

\section{PRELIMINARIES}
\label{sec:prelim}

In this section, we introduce the background on the Taylor's theorem with Lagrange remainder, Control Barrier Functions (CBFs), and Control Lyapunov Functions (CLFs).

\subsection{Taylor's theorem with Lagrange remainder}

For completeness and self-contain-ness, we present Taylor's theorem with Lagrange remainder followed by a proof.

\begin{theorem} \label{thm:TL}(Taylor's Theorem with Lagrange Remainder \cite{taylor1717methodus} \cite{Lagrange1797})
Given a function $h:[0,\infty) \rightarrow \mathbb{R}$ that is $m$ times differentiable, $h(t), t\in[0,\infty)$ can be written as the Taylor's expansion with a Lagrange remainder:
\begin{equation}
\begin{aligned}
    h(t) = \sum_{k=0}^{m-1} \frac{h^{(k)}(t_0)}{k!} (t-t_0)^k + \frac{h^{(m)}(\xi)}{m!} (t-t_0)^m, \\t_0\in[0,\infty), \xi\in(t_0, t),
\end{aligned}
\end{equation}
\end{theorem}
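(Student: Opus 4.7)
The plan is to prove Theorem~\ref{thm:TL} by the classical Rolle-based argument: I construct an auxiliary function that vanishes at both $s = t_0$ and $s = t$, then apply Rolle's theorem to locate the Lagrange remainder point $\xi$. Concretely, for fixed $t_0 < t$ I would define
\begin{equation*}
g(s) = h(t) - \sum_{k=0}^{m-1} \frac{h^{(k)}(s)}{k!}(t-s)^k - C (t-s)^m,
\end{equation*}
where the constant $C$ is chosen so that $g(t_0) = 0$, i.e.\ $C = \bigl(h(t) - \sum_{k=0}^{m-1}\tfrac{h^{(k)}(t_0)}{k!}(t-t_0)^k\bigr)/(t-t_0)^m$. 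Proving that $C = h^{(m)}(\xi)/m!$ for some $\xi \in (t_0,t)$ is exactly the statement to be established.

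The next step is to verify the hypotheses of Rolle's theorem. The value $g(t) = 0$ is immediate because every summand contains a positive power of $(t-s)$ for $k \geq 1$, the $k=0$ summand equals $-h(t)$, and the remainder term $-C(t-s)^m$ also vanishes at $s=t$. Since $h$ is $m$ times differentiable on $[t_0,t]$, the function $g$ is continuous on $[t_0,t]$ and differentiable on $(t_0,t)$, so Rolle's theorem yields some $\xi \in (t_0,t)$ with $g'(\xi) = 0$.

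The core calculation is then to compute $g'(s)$ and show the interior sum telescopes. Differentiating the generic summand gives
\begin{equation*}
\frac{d}{ds}\!\left[\frac{h^{(k)}(s)}{k!}(t-s)^k\right] = \frac{h^{(k+1)}(s)}{k!}(t-s)^k - \frac{h^{(k)}(s)}{(k-1)!}(t-s)^{k-1},
\end{equation*}
and summing from $k=0$ to $m-1$ (with the convention that the second term is absent when $k=0$) produces successive cancellations that collapse the sum to $-\frac{h^{(m)}(s)}{(m-1)!}(t-s)^{m-1}$. Combined with $\tfrac{d}{ds}[-C(t-s)^m] = mC(t-s)^{m-1}$, the equation $g'(\xi) = 0$ rearranges to $C = h^{(m)}(\xi)/m!$. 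Substituting this value of $C$ back into the relation $g(t_0)=0$ recovers exactly the claimed expansion.

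The main obstacle I anticipate is not conceptual but bookkeeping: ensuring the telescoping sum is written so that the $k=0$ boundary does not produce a spurious $(t-s)^{-1}$ term and that the final surviving summand has the right sign and factorial. I would handle this by re-indexing one of the two resulting sums ($j = k+1$ in the first, $j = k$ in the second) so the cancellation is manifest term by term. The case $t < t_0$ follows by the same construction on $[t, t_0]$, and the case $t = t_0$ is trivial since both sides of the identity reduce to $h(t_0)$.
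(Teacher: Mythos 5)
Your proof is correct, but it takes a genuinely different route from the paper's. The paper builds the expansion by iterating the fundamental theorem of calculus to obtain the integral form of the remainder $R_m(t)=\int_{t_0}^{t}\cdots\int_{t_0}^{t_{m-1}}h^{(m)}(t_m)\,dt_m\cdots dt_1$, collapses the iterated integral to $\int_{t_0}^{t}h^{(m)}(t_m)\frac{(t-t_m)^{m-1}}{(m-1)!}\,dt_m$ via Fubini's theorem, and then converts this to the Lagrange form by the mean value theorem for integrals. You instead use the classical auxiliary function $g(s)$ with the constant $C$ calibrated at $s=t_0$, and a single application of Rolle's theorem after the telescoping computation of $g'$. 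Both arguments are valid, and your telescoping bookkeeping (re-indexing so the $k=0$ boundary term is just $h'(s)$ with no spurious $(t-s)^{-1}$) is handled correctly. The trade-off: your Rolle argument is more self-contained and needs only that $h^{(m)}$ exist on the open interval $(t_0,t)$ (with $h^{(m-1)}$ continuous on the closed interval), which matches the theorem's stated hypothesis of $m$-fold differentiability more faithfully; the paper's route implicitly requires $h^{(m)}$ to be integrable (for the fundamental theorem of calculus step) and effectively continuous (for the mean value theorem for integrals step), but it yields the integral remainder \eqref{eqn:7} as a useful intermediate object, which is the form the authors lean on conceptually when interpreting the Taylor--Lagrange certificate later in the paper.
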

\begin{proof}
    Following the fundamental theorem of calculus, we have that:
    \begin{equation} \label{eqn:1}
        h(t) = h(t_0) + \int_{t_0}^t h'(t_1)dt_1.
    \end{equation}

    We can further apply fundamental theorem of calculus to $h'$ in the interval $(t_0,t_1)$:
    \begin{equation}
    \label{eqn:2}
        h'(t_1) = h'(t_0) + \int_{t_0}^{t_1} h''(t_2)dt_2.
    \end{equation}

    Substituting (\ref{eqn:2}) into (\ref{eqn:1}), we have:
    \begin{equation} \label{eqn:4}
    \begin{aligned}
        h(t) &= h(t_0) + \int_{t_0}^t \left(h'(t_0) + \int_{t_0}^{t_1} h''(t_2)dt_2\right)dt_1 \\&= h(t_0) + h'(t_0)(t-t_0)+\int_{t_0}^t\int_{t_0}^{t_1} h''(t_2)dt_2dt_1.
    \end{aligned}
    \end{equation}

    Following a similar procedure, we recursively substitute the following equations into (\ref{eqn:4}):
    \begin{equation}
    \label{eqn:5}
        h^{(k)}(t_k) = h^{(k)}(t_0) + \int_{t_0}^{t_k} h^{(k+1)}(t_{k+1})dt_{k+1}, k\in[2,m-1],
    \end{equation}
    and we get:
    \begin{equation} 
    \label{eqn:6}
\begin{aligned}
    h(t) = \sum_{k=0}^{m-1} \frac{h^{(k)}(t_0)}{k!} (t-t_0)^k + R_m(t)
\end{aligned}
\end{equation}
where
\begin{equation}
\label{eqn:7}
    R_m(t) = \int_{t_0}^t\int_{t_0}^{t_1}\dots \int_{t_0}^{t_{m-1}} h^{(m)}(t_{m})dt_{m}\dots dt_2dt_1.
\end{equation}
Equation (\ref{eqn:6}) is the Taylor's expansion with an integral remainder. 

In (\ref{eqn:7}), we wish to do integration over $t_1$ first, and then $t_2, \dots, t_{m-1}$ since $h^{(m)}(t_{m})$ only depends on $t_m$. Following  Fubini's theorem, equation (\ref{eqn:7}) can be reformulated as
\begin{equation}
\label{eqn:8}
\begin{aligned}
    R_m(t) &= \int_{t_0}^t \int_{t_{m}}^{t}\dots \int_{t_2}^{t} h^{(m)}(t_{m})dt_{1}\dots dt_{m-1}dt_{m} \\&= \int_{t_0}^t h^{(m)}(t_{m})\frac{(t - t_m)^{m-1}}{(m-1)!}dt_{m}.
\end{aligned}
\end{equation}

By the mean value theorem, the integral part in $R_m(t)$ can be replaced by $(t-t_0)$ multiplied by the mean value at some $\xi_1 \in(t_0, t)$, and thus (\ref{eqn:8}) can be reformulated as
\begin{equation}
\label{eqn:9}
\begin{aligned}
    R_m(t) = h^{(m)}(\xi_1)\frac{(t - \xi_1)^{m-1}}{(m-1)!}(t-t_0), \xi_1\in(t_0, t).
\end{aligned}
\end{equation}

Again, by the mean value theorem (or  extreme value theorem and intermediate value theorem), equation (\ref{eqn:9}) can be rewritten as
\begin{equation}
\label{eqn:10}
\begin{aligned}
    R_m(t) = h^{(m)}(\xi)\frac{(t - t_0)^m}{m!}, \xi\in(t_0, t),
\end{aligned}
\end{equation}
where the $\xi$ in the above is different from the $\xi_1$ in (\ref{eqn:9}). This wraps up the proof.

\end{proof}

\subsection{Control Barrier Functions and Control Lyapunov Functions}

Consider an affine control system 
\begin{equation} \label{eqn:affine}%
\dot {\bm{x}} = f(\bm x) + g(\bm x)\bm u,
\end{equation}
where  $\bm x\in X\in \mathbb{R}^n$, $f$ is as defined above, $g:\mathbb{R}^n \rightarrow \mathbb{R}^{n\times q}$ is locally Lipschitz, and $\bm u\in U \subset \mathbb{R}^q$ ($U$ denotes the control constraint set). Solutions $\bm x(t)$ of (\ref{eqn:affine}), starting at $\bm x(0)$, $t\geq 0$, are forward complete.

\begin{definition}
    (Class $\mathcal{K}$ function \cite{Khalil2002}) A function $\alpha:[0, a)\rightarrow[0, \infty), a > 0$ is said to belong to class $\mathcal{K}$ if it is strictly increasing and passes the origin.
\end{definition}

\begin{definition}
    (Forward invariance \cite{Ames2017}) A set $C$ is forward invariant for (\ref{eqn:affine}) if its solution $x(t)\in C, \forall t\geq 0$ for any $x(0)\in C$.
\end{definition}

\begin{definition}
    (Relative degree \cite{Khalil2002}) The relative degree of a function $b(\bm x)$ (or a constraint $b(\bm x)\geq 0$) w.r.t. (\ref{eqn:affine}) is defined as the minimum number of times we need to differentiate $b(\bm x)$ along system (\ref{eqn:affine}) until any control component of $\bm u$ explicitly shows up in the corresponding derivative.
\end{definition}

For a safety constraint $b(\bm x)\geq0$ that has relative
degree $m$ w.r.t system (\ref{eqn:affine}), $b:\mathbb{R}^{n}\rightarrow\mathbb{R}$, we first define a sequence of functions $\psi_{i}:X\rightarrow\mathbb{R},i\in\{1,\dots,m\}$:
\begin{equation}
\begin{aligned} \psi_i(\bm x) := \dot \psi_{i-1}(\bm x) + \alpha_i(\psi_{i-1}(\bm x)),i\in\{1,\dots,m\}, \end{aligned} \label{eqn:functions}%
\end{equation}
where $\psi_{0}(\bm
x):=b(\bm x)$, $\alpha_{i}(\cdot),i\in\{1,\dots,m\}$ denote $(m-i)^{th}$ order
differentiable class $\mathcal{K}$ functions.

We then define a sequence of safe sets $C_{i}, i\in\{1,\dots,m\}$ corresponding to (\ref{eqn:functions}):
\begin{equation}
\label{eqn:sets}\begin{aligned} C_i := \{\bm x \in X: \psi_{i-1}(\bm x) \geq 0\}, i\in\{1,\dots,m\}. \end{aligned}
\end{equation}

\begin{definition}
	\label{def:hocbf} (\textit{High Order Control Barrier Function (HOCBF)}
	\cite{xiao2021high}) Let $C_{i}, i\in \{1,\dots, m\}$ be defined by (\ref{eqn:sets}%
	) and $\psi_{i}(\bm x), i\in\{1,\dots, m\}$ be defined by
	(\ref{eqn:functions}). A function $b: \mathbb{R}^{n}\rightarrow\mathbb{R}$ is
	a High Order Control Barrier Function (HOCBF) of relative degree $m$ for
	system (\ref{eqn:affine}) if there exist $(m-i)^{th}$ order differentiable
	class $\mathcal{K}$ functions $\alpha_{i},i\in\{1,\dots,m-1\}$ and a class
	$\mathcal{K}$ function $\alpha_{m}$ such that 
	
	{\small\begin{equation}
		\label{eqn:constraint}\begin{aligned} \sup_{\bm u\in U}[L_f\psi_{m-1}(\bm x) + L_g\psi_{m-1}(\bm x)\bm u + \alpha_m(\psi_{m-1}(\bm x))] \geq 0, \end{aligned}
		\end{equation}
	}
	\noindent for all $\bm x\in C_{1} \cap,\dots, \cap C_{m}$. In
	(\ref{eqn:constraint}), $L_{f}$ (or $L_{g}$) denotes the Lie derivative along
	$f$ (or $g$). Moreover, it is assumed that $L_g\psi_{m-1}(\bm x) \ne 0$ when $b(\bm x) = 0$.
\end{definition}

The HOCBF is a general form of the relative degree one CBF \cite{Ames2017},
\cite{Glotfelter2017}. In other words, setting $m=1$ reduces a HOCBF to
the common CBF form:
\begin{equation}\label{eqn:cbf0}
L_fb(\bm x) + L_gb(\bm x)\bm u + \alpha_1(b(\bm x))\geq 0.
\end{equation}
We may also add a time-varying penalty function on each of the class $\mathcal{K}$ functions in a HOCBF, and this results in the so-called adaptive CBFs \cite{xiao2021adaptive}.

\begin{theorem} \label{thm:hocbf}
	(\cite{xiao2021high}) Given a HOCBF $b(\bm x)$ from Def. \ref{def:hocbf} with the associated sets $C_1, C_2,\dots, C_{m}$ defined by (\ref{eqn:sets}), if $\bm x(0) \in C_1 \cap C_2\cap,\dots,\cap C_{m}$, then any Lipschitz continuous controller $\bm u(t)$ that satisfies the HOCBF constraint in (\ref{eqn:constraint}), $\forall t\geq 0$ renders
	$C_1$ $\cap  C_2\cap,\dots, \cap C_{m}$ forward invariant for system (\ref{eqn:affine}).
\end{theorem}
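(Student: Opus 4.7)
The plan is a backward induction on $i$, from $i = m-1$ down to $i = 0$, proving that $\psi_i(\bm x(t)) \ge 0$ for all $t\ge 0$; by (\ref{eqn:sets}) this is exactly $\bm x(t) \in C_{i+1}$, so the intersection $C_1 \cap \dots \cap C_m$ is rendered forward invariant.

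\textbf{Base case.} Substituting the given Lipschitz controller $\bm u(t)$ into (\ref{eqn:constraint}) and identifying $L_f\psi_{m-1}(\bm x) + L_g\psi_{m-1}(\bm x)\bm u$ with $\dot{\psi}_{m-1}(\bm x)$ along solutions of (\ref{eqn:affine}) gives
\begin{equation*}
\dot{\psi}_{m-1}(\bm x(t)) \ge -\alpha_m(\psi_{m-1}(\bm x(t))), \quad t\ge 0.
\end{equation*}
Since $\alpha_m$ is class $\mathcal{K}$ (so $\alpha_m(0)=0$) and $\psi_{m-1}(\bm x(0))\ge 0$ (from $\bm x(0)\in C_m$), a comparison-lemma argument (detailed below) yields $\psi_{m-1}(\bm x(t))\ge 0$ for all $t\ge 0$, i.e.\ $\bm x(t)\in C_m$.

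\textbf{Inductive step.} Assume $\psi_i(\bm x(t)) \ge 0$ for all $t\ge 0$. The recursion (\ref{eqn:functions}) can be rearranged as $\dot{\psi}_{i-1}(\bm x) = \psi_i(\bm x) - \alpha_i(\psi_{i-1}(\bm x))$, so under the inductive hypothesis
\begin{equation*}
\dot{\psi}_{i-1}(\bm x(t)) \ge -\alpha_i(\psi_{i-1}(\bm x(t))).
\end{equation*}
Combined with $\psi_{i-1}(\bm x(0))\ge 0$ (from $\bm x(0)\in C_i$), the same comparison argument gives $\psi_{i-1}(\bm x(t))\ge 0$, hence $\bm x(t)\in C_i$. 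Running the induction from $i=m-1$ down to $i=1$ covers every set in the intersection.

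\textbf{Main obstacle.} The delicate step is the comparison assertion: $\dot{\psi}(t)\ge -\alpha(\psi(t))$ with $\psi(0)\ge 0$ and class-$\mathcal{K}$ $\alpha$ implies $\psi(t)\ge 0$. Because $\alpha$ is only continuous (not necessarily Lipschitz), generic Gr\"onwall comparison theorems do not apply off-the-shelf. I would handle this by a first-crossing-time argument: suppose, for contradiction, that $t^{\ast}=\inf\{t\ge 0 : \psi(t)<0\}$ is finite. Continuity of $t\mapsto\psi(\bm x(t))$ forces $\psi(\bm x(t^{\ast}))=0$, whence $\dot{\psi}(\bm x(t^{\ast}))\ge -\alpha(0)=0$, which contradicts $\psi(\bm x(t))<0$ in every right neighborhood of $t^{\ast}$. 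This single lemma is then invoked at each level of the induction. Note that the nondegeneracy assumption $L_g\psi_{m-1}(\bm x)\ne 0$ at $b(\bm x)=0$ is needed only to guarantee feasibility of some controller satisfying (\ref{eqn:constraint}); the invariance argument above does not use it.
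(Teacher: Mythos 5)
First, note that the paper itself gives no proof of this theorem: it is quoted verbatim from \cite{xiao2021high} as a preliminary, so the only meaningful comparison is with the standard argument in that reference. Your overall skeleton --- backward induction from $\psi_{m-1}$ down to $\psi_0=b$, using the recursion $\dot\psi_{i-1}=\psi_i-\alpha_i(\psi_{i-1})$ to reduce each level to the scalar differential inequality $\dot\psi\ge-\alpha(\psi)$ with $\psi(0)\ge 0$ --- is exactly that standard argument, and the bookkeeping ($\psi_i\ge 0$ iff $\bm x\in C_{i+1}$, base case from the HOCBF constraint with $\dot\psi_{m-1}=L_f\psi_{m-1}+L_g\psi_{m-1}\bm u$) is correct. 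Your closing remark that $L_g\psi_{m-1}\ne 0$ is only a feasibility condition is also right.

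The gap is in the one lemma you flagged as delicate, and your proposed fix does not close it. From $\psi(t^{\ast})=0$ and $\dot\psi(t^{\ast})\ge-\alpha(0)=0$ you cannot derive a contradiction with $\psi$ dipping negative immediately after $t^{\ast}$: the sign of the derivative at a single point does not control the sign of the function on a right neighborhood. Concretely, $\psi(t)=-(t-t^{\ast})^{2}$ satisfies $\psi(t^{\ast})=0$ and $\dot\psi(t^{\ast})=0\ge 0$ yet is strictly negative for every $t>t^{\ast}$; and since $\alpha(0)=0$ the inequality gives you exactly $\dot\psi(t^{\ast})\ge 0$, never the strict positivity that would force $\psi$ upward. (A second, smaller slip: the infimum definition of $t^{\ast}$ only guarantees that every right neighborhood \emph{contains} points with $\psi<0$, not that $\psi<0$ throughout it.) There is also a domain issue you glide over: $\alpha$ is a class $\mathcal{K}$ function on $[0,a)$ and the constraint (\ref{eqn:constraint}) is only assumed on $C_1\cap\dots\cap C_m$, so the differential inequality is not even defined once $\psi<0$. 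The standard repairs are either (i) the comparison lemma (Khalil, Lemma 3.4): bound $\psi(t)$ below by the maximal solution of $\dot y=-\alpha(y)$, $y(0)=\psi(0)\ge 0$, and show that solution never leaves $[0,\infty)$ because $y\equiv 0$ is an equilibrium; or (ii) extend $\alpha$ to an extended class $\mathcal{K}$ function with $\alpha(s)<0$ for $s<0$, so that $\dot\psi>0$ wherever $\psi<0$, and run a \emph{last-exit-time} argument on the maximal interval $(t^{\ast},t_1]$ where $\psi<0$: there $\psi$ is strictly increasing, so $\psi(t_1)\ge\psi(t^{\ast})=0$, a genuine contradiction. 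Either patch, inserted at each level of your induction, completes the proof; without one of them the argument does not go through.
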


\begin{definition}
	\label{def:clf} (\textit{Exponentially stabilizing Control Lyapunov Function (CLF)} \cite{Aaron2012}) A
	continuously differentiable function $V: \mathbb{R}^{n}\rightarrow\mathbb{R}$
	is an exponentially stabilizing control Lyapunov function (CLF) for system
	(\ref{eqn:affine}) if there exist positive constants $c_{1}, c_{2}, c_{3}$ such
	that for $\forall\bm x\in X$, $c_{1}||\bm x||^{2} \leq V(\bm x)
	\leq c_{2} ||\bm x||^{2}, $
	\begin{equation} \label{eqn:clf}
	\inf_{\bm u\in U} \lbrack L_{f}V(\bm x)+L_{g}V(\bm x)
	\bm u + c_{3}V(\bm x)\rbrack\leq0.
	\end{equation}
	
\end{definition}

CBFs/HOCBFs and CLFs are usually employed to transform a nonlinear constrained optimal control problem into a sequence of QPs while ensuring the system safety and stability \cite{Ames2017}.

\section{Taylor-Lagrange Certificate and Control for Safety}
\label{sec:tlc}

In this section, we propose Taylor-Lagrange Certificate (TLCe) to certify the safety of nonlinear systems and Taylor-Lagrange Control (TLC) for obtaining state-feedback safe control policies for nonlinear control systems. 

\subsection{Taylor-Lagrange Certificate}

Consider a system of the form
\begin{equation}\label{eqn:sys}%
\dot {\bm{x}} = f(\bm x),
\end{equation}
with $\bm x\in X \in \mathbb{R}^n$ ($X$ denotes the state constraint set) and $f:\mathbb{R}^n\rightarrow \mathbb{R}^n$ globally Lipschitz. Solutions $\bm x(t)$ of (\ref{eqn:sys}), starting at $\bm x(0)$, $t\geq 0$, are forward complete.

Consider a safety constraint $h(\bm x)\geq 0$ for system (\ref{eqn:sys}) where $h:\mathbb{R}^n \rightarrow \mathbb{R}$ is $r\in\mathbb{N}$ times differentiable. %Following explicit dynamics analysis \cite{ascher1997implicit}, the evolution of the state $\bm x$ of system (\ref{eqn:sys}) can be represented as an explicit function of time. Therefore, the safety constraint $h(\bm x)\geq 0$ can be rewritten as $h(t)\geq 0, \forall t\geq t_0$ where $h:[0,\infty)\rightarrow \mathbb{R}$.
Then we define a safe set $C$ based on $h$: 
\begin{equation}\label{eqn:C-1}
C := \{\bm x\in \mathbb{R}^n:h(\bm x)\geq 0\},
\end{equation} 
where $h(x) = 0$ at the boundary of $C$ and $h(x) < 0$ outside $C$.

We expand the safety function $h(\bm x)$ with respect to the time, and define the following Taylor-Lagrange certificate:
\begin{definition} [Taylor-Lagrange Certificate]\label{def:tlce}
    The $r\in\mathbb{N}$ times differentiable function $h:\mathbb{R}^n\rightarrow \mathbb{R}$ is a Taylor-Lagrange Certificate (TLCe) of order $m\in\{1,\dots, r\}$ for system (\ref{eqn:sys}) if
    \begin{equation} \label{eqn:tlce}
       \begin{aligned}
     \sum_{k=0}^{m-1} \frac{L_f^kh(\bm x(t_0))}{k!} (t-t_0)^k + \frac{L_f^mh(\bm x(\xi))}{m!} (t-t_0)^m \geq 0, \\t_0\in[0,\infty), \xi\in(t_0, t),
\end{aligned}
    \end{equation} for all $\bm x(t_0)\in C$,
    where $L_fh$ denotes the Lie derivative of $h$ along $f$.
\end{definition}

\begin{remark} [Interpretation of TLCe]
When the system is initially safe, i.e., $h(\bm x(t_0))\geq 0$, then the TLCe expands the safety constraint $h(\bm x(t))\geq 0$ based on the initial time $t_0$ for some $t > t_0$. In other words, the TLCe evaluates $h(\bm x(t))$ based on the initial condition at $t_0$ and the evolution of the system state within $[t_0, t]$ along the system dynamics (\ref{eqn:sys}).
\end{remark}

\begin{theorem}
\label{thm:tlce}
    Given a set as in (\ref{eqn:C-1}), if there exists a TLCe $b:C\rightarrow \mathbb{R}$, then C is forward invariant for system (\ref{eqn:sys}).
\end{theorem}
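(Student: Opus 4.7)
My plan is to observe that the TLCe condition is, by Taylor's theorem with Lagrange remainder (Theorem \ref{thm:TL}), essentially a rewriting of $h(\bm x(t))$ itself along the trajectory. So the statement should reduce to a direct algebraic identification.

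First, I would fix an arbitrary initial condition $\bm x(t_0) \in C$ and consider the scalar composition $\phi(t) := h(\bm x(t))$ along a solution of (\ref{eqn:sys}). Because $h$ is $r$-times differentiable and $f$ is (globally) Lipschitz with the solutions being forward complete, $\phi$ is at least $m$-times differentiable, and by the chain rule together with the definition of the Lie derivative, $\phi^{(k)}(t) = L_f^k h(\bm x(t))$ for every $k \in \{0, 1, \dots, m\}$.

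Next I would apply Theorem \ref{thm:TL} directly to $\phi$ on $[t_0, t]$: there exists some $\xi \in (t_0, t)$ such that
\begin{equation*}
\phi(t) = \sum_{k=0}^{m-1} \frac{\phi^{(k)}(t_0)}{k!}(t-t_0)^k + \frac{\phi^{(m)}(\xi)}{m!}(t-t_0)^m.
\end{equation*}
Substituting $\phi^{(k)} = L_f^k h$ gives
\begin{equation*}
h(\bm x(t)) = \sum_{k=0}^{m-1} \frac{L_f^k h(\bm x(t_0))}{k!}(t-t_0)^k + \frac{L_f^m h(\bm x(\xi))}{m!}(t-t_0)^m.
\end{equation*}
By the hypothesis that $h$ is a TLCe and that $\bm x(t_0) \in C$, the right-hand side is nonnegative (for exactly the $\xi \in (t_0, t)$ furnished by Taylor's theorem). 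Hence $h(\bm x(t)) \geq 0$, so $\bm x(t) \in C$. Since $t > t_0$ and the choice of $\bm x(t_0) \in C$ were arbitrary, forward invariance of $C$ follows in particular by taking $t_0 = 0$ and letting $t$ range over $[0, \infty)$.

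The potential obstacle is minor but worth being explicit about, namely reconciling the quantifier on $\xi$: Definition \ref{def:tlce} states the inequality with a symbol $\xi \in (t_0, t)$, and I would read it (as the construction forces) as the $\xi$ produced by the Lagrange remainder rather than a universal quantifier; under this reading the proof is immediate. If instead the definition were intended ``for all $\xi$,'' the argument would be even stronger, since the particular $\xi$ from Taylor's theorem is a special case. The only other care to take is noting that the trajectory $\bm x(\xi)$ appearing in the remainder lives on the portion of the orbit between $\bm x(t_0)$ and $\bm x(t)$, which is well-defined because solutions of (\ref{eqn:sys}) are forward complete.
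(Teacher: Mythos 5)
Your proof is correct and follows essentially the same route as the paper's: both reduce $h(\bm x(t))$ to a scalar function of time along the trajectory, apply Theorem \ref{thm:TL}, identify the Taylor coefficients with Lie derivatives, and invoke the TLCe inequality (\ref{eqn:tlce}) to conclude nonnegativity. Your added remarks on the chain-rule identification $\phi^{(k)} = L_f^k h$ and on the existential reading of $\xi$ are more explicit than the paper's treatment but do not change the argument.
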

\begin{proof}
    Following explicit dynamics analysis \cite{ascher1997implicit}, the evolution of the state $\bm x$ of system (\ref{eqn:sys}) can be represented as an explicit function of time. Therefore, the safety constraint $h(\bm x)\geq 0$ can be rewritten as $h(t)\geq 0, \forall t\geq t_0$ where $h:[0,\infty)\rightarrow \mathbb{R}$.
    
    By Thm. \ref{thm:TL}, we have that 
    \begin{equation}
\begin{aligned}
    h(t) = \sum_{k=0}^{m-1} \frac{h^{(k)}(t_0)}{k!} (t-t_0)^k + \frac{h^{(m)}(\xi)}{m!} (t-t_0)^m \\= \sum_{k=0}^{m-1} \frac{L_f^kh(\bm x(t_0))}{k!} (t-t_0)^k + \frac{L_f^mh(\bm x(\xi))}{m!} (t-t_0)^m, \\t_0\in[0,\infty), \xi\in(t_0, t),
\end{aligned}
\end{equation}

By (\ref{eqn:tlce}), we have that $h(\bm x(t)) = h(t) \geq 0, \forall t\geq t_0$. Therefore, the set $C$ is forward invariant for system (\ref{eqn:sys}).
\end{proof}

The proposed TLCe has some connections with existing Barrier Functions (BFs) and High-Order BFs (HOBFs) \cite{xiao2021high}, and it also pushes their boundaries. The following proposition shows their relationship:

\begin{proposition} \label{prop:tlce}
    Assume $L_f^mh(\bm x(t_0)) \ne 0$ and $sgn(L_f^mh(\bm x(\xi))) = sgn(L_f^mh(\bm x(t_0)))$ for a TLCe in Def. \ref{def:tlce}. The first-order TLCe  degenerates to an adaptive BF with a linear class $\mathcal{K}$ function, and the high-order TLCe may extend the HOBF to the imaginary value domain.
\end{proposition}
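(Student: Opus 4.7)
The plan is to split the statement into its two claims and handle each by algebraic manipulation of the TLCe inequality under the sign hypothesis. The common starting move is to use the sign assumption $\mathrm{sgn}(L_f^mh(\bm x(\xi)))=\mathrm{sgn}(L_f^mh(\bm x(t_0)))$ together with $L_f^mh(\bm x(t_0))\neq 0$ to write $L_f^mh(\bm x(\xi))=c\,L_f^mh(\bm x(t_0))$ for some $c=c(\xi,t_0)>0$. This collapses the Lagrange remainder from an unknown intermediate evaluation into a positive rescaling of a ``current-state'' quantity, so that the TLCe condition (\ref{eqn:tlce}) becomes a polynomial in $\tau:=t-t_0$ whose coefficients depend only on the state at $t_0$.

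For the first-order case, I would set $m=1$ in the reformulated TLCe to obtain $h(\bm x(t_0)) + c\,L_f h(\bm x(t_0))\,\tau \geq 0$. Dividing by $c\tau>0$ and rearranging gives $L_f h(\bm x(t_0)) + \beta(t)\,h(\bm x(t_0)) \geq 0$ with $\beta(t):=1/(c\tau)>0$. Comparing with the standard CBF form (\ref{eqn:cbf0}), this is exactly the CBF condition associated to the linear class $\mathcal{K}$ function $\alpha(s)=s$ with a time-varying positive multiplier $\beta(t)$ on it, which is the defining shape of an adaptive BF as introduced in \cite{xiao2021adaptive}. This identification proves the first half of the claim.

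For the high-order case, the plan is to do a coefficient-matching argument. After the sign reformulation, the TLCe reads $P(\tau)=\sum_{k=0}^{m-1}\frac{L_f^kh(\bm x(t_0))}{k!}\tau^k+c\,\frac{L_f^mh(\bm x(t_0))}{m!}\tau^m\geq 0$. Dividing through by the leading coefficient at fixed $\tau$ puts the condition in the form $L_f^mh+\sum_{k=0}^{m-1}\gamma_k(\tau)\,L_f^kh\geq 0$ with $\gamma_k(\tau)=m!/(k!\,c\,\tau^{m-k})$. On the other hand, expanding a HOBF with linear class $\mathcal{K}$ functions $\alpha_i(s)=k_is$ produces the same-shape inequality $L_f^mh+\sum_{k=0}^{m-1}e_{m-k}(k_1,\dots,k_m)\,L_f^kh\geq 0$, where $e_j$ is the $j$-th elementary symmetric polynomial. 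Equating coefficients yields a system $e_{m-k}(k_1,\dots,k_m)=\gamma_k(\tau)$; by Vieta, $-k_1,\dots,-k_m$ are the roots of a fixed degree-$m$ polynomial whose coefficients are $\gamma_k(\tau)$. I would then carry out the $m=2$ check as a concrete illustration: the system $k_1+k_2=2/\tau$, $k_1k_2=2/\tau^2$ gives $k_i=(1\pm i)/\tau$, which are genuinely complex. This produces an equivalent HOBF representation whose class $\mathcal{K}$ parameters live in $\mathbb{C}\setminus\mathbb{R}$, which is precisely the ``extension to the imaginary value domain'' asserted.

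The main obstacle I expect is the general-$m$ coefficient-matching step: while writing the bijection $\{\text{HOBF with linear }\alpha_i\}\leftrightarrow\{\text{polynomials in }D\}$ is routine, arguing that the specific coefficient sequence $\gamma_k(\tau)=m!/(k!\,c\,\tau^{m-k})$ \emph{generically} produces a polynomial with non-real roots (rather than only in isolated examples) requires inspecting the discriminant-type conditions for that particular family of coefficients; I would handle this by noting that the $m=2$ calculation already forces complex $k_i$ for every $\tau>0$ and every $c>0$, and then remarking that for $m>2$ the same phenomenon propagates because the induced polynomial is not forced to have only negative real roots. A secondary subtlety is interpreting complex-valued class $\mathcal{K}$ coefficients, since Def.~\ref{def:clf}-style class $\mathcal{K}$ functions are real; the resolution is to view the HOBF factorization $\prod_i(D+k_i)$ formally over $\mathbb{C}$ and observe that the product is a real operator whenever the complex roots appear in conjugate pairs, so the HOBF inequality itself stays real-valued even though no real-$k_i$ HOBF realization exists.
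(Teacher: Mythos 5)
Your overall route is the same as the paper's: factor the Lagrange remainder as $L_f^m h(\bm x(\xi)) = c\,L_f^m h(\bm x(t_0))$ with $c>0$ (the paper calls this ratio $p(\xi)$), divide the resulting polynomial inequality in $\tau = t-t_0$ by the positive quantity multiplying the safety function of highest order, and match the coefficients against a HOBF built from linear class $\mathcal{K}$ functions via Vieta. The first-order half of your argument is correct and coincides with the paper's identification of the adaptive penalty $1/(c\tau)$.

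The high-order half contains a concrete error: in your $m=2$ check you drop the factor $c$. With your own coefficients $\gamma_1=2/(c\tau)$ and $\gamma_0=2/(c\tau^2)$, the Vieta system is $k_1+k_2=2/(c\tau)$, $k_1k_2=2/(c\tau^2)$, whose discriminant is $\frac{4}{c^2\tau^2}(1-2c)$; the roots are $k_{1,2}=\bigl(1\mp\sqrt{1-2c}\bigr)/(c\tau)$ and are complex if and only if $c>1/2$, not ``for every $\tau>0$ and every $c>0$'' as you assert (your formula $k_i=(1\pm i)/\tau$ is the special case $c=1$). For $c\le 1/2$ the equivalent HOBF has real coefficients, so the correct conclusion --- and the one the paper draws --- is conditional: the TLCe \emph{may} leave the real domain, namely exactly when $p(\xi)>1/2$. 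Since the proposition only claims ``may extend,'' exhibiting the regime $c>1/2$ still suffices, but your justification for propagating the phenomenon to $m>2$ leans on the false universal claim and should be replaced by the paper's conditional statement or by an explicit example. Your closing remark about conjugate root pairs keeping the factored operator real is a sensible observation the paper does not make, but it does not repair the quantifier error.
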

\begin{proof}
\textbf{First-order case.} When the TLCe has an order $m = 1$, the corresponding first-order TLCe (\ref{eqn:tlce}) becomes:
\begin{equation} \label{eqn:1tlce}
    h(\bm x(t_0)) + L_fh(\bm x(\xi))(t-t_0)\geq 0, \xi\in(t_0, t).
\end{equation}

Since $L_f h(\bm x(t_0))\ne 0$, the above equation can be rewritten as
\begin{equation} \label{eqn:acbf1}
    h(\bm x(t_0)) + L_fh(\bm x(t_0))p(\xi)(t-t_0)\geq 0, \xi\in(t_0, t), 
\end{equation}
where
\begin{equation}
    p(\xi) = \frac{L_fh(\bm x(\xi))}{L_fh(\bm x(t_0))}.
\end{equation}

As we assume $p(\xi) > 0$ (the signs of $L_fh(\bm x(\xi)), L_fh(\bm x(t_0))$ are assumed to be the same), equation (\ref{eqn:acbf1}) is equivalent to the adaptive barrier function/certificate with a linear class $\mathcal{K}$ function:
\begin{equation} \label{eqn:acbf2}
    \frac{1}{p(\xi)(t-t_0)}h(\bm x(t_0)) + L_fh(\bm x(t_0))\geq 0, \xi\in(t_0, t), 
\end{equation}
where $\frac{1}{p(\xi)(t-t_0)}$ is the adaptive penalty function depending on the future time $\xi\in(t_0, t)$ when we consider $t_0$ as the current time. When varying $t_0$ between $[0,\infty)$, the above equation offers an adaptive barrier function/certificate \cite{xiao2021adaptive}. Note that $L_fh(\bm x(\xi))\ne 0$ in the above equation, otherwise $h(\bm x(t))$ remains unchanged by (\ref{eqn:1tlce}) and Thm. \ref{thm:TL}. The only difference between the first-order TLCe (\ref{eqn:1tlce}) and the adaptive barrier function \cite{xiao2021adaptive} lies in the fact that adaptive term $\frac{1}{p(\xi)(t-t_0)}$ in the first-order TLCe depends on the future time $\xi\in(t_0,t)$, while the adaptive term in the adaptive barrier function depends on the current time. 

\textbf{High-order case.} In the high-order case (i.e., $m>1$ in (\ref{eqn:tlce})), we also assume $L_f^mh(\bm x(t_0))\ne 0$. Then the TLCe (\ref{eqn:tlce}) can be rewritten as:
\begin{equation} \label{eqn:htlce}\begin{aligned}
     \sum_{k=0}^{m-1} \frac{L_f^kh(\bm x(t_0))}{k!} (t-t_0)^k + \frac{L_f^mh(\bm x(t_0))}{m!} p(\xi)(t-t_0)^m \geq 0, \\ \xi\in(t_0, t),
\end{aligned}
\end{equation}
where
\begin{equation}
    p(\xi) = \frac{L_f^mh(\bm x(\xi))}{L_f^mh(\bm x(t_0))}.
\end{equation}

As we assume $p(\xi) > 0$, equation (\ref{eqn:htlce}) is equivalent to the high-order case of adaptive barrier function/certificate with linear class $\mathcal{K}$ functions:
\begin{equation} \label{eqn:htlce2}\begin{aligned}
     \frac{m!}{p(\xi)(t-t_0)^m}\sum_{k=0}^{m-1} \frac{L_f^kh(\bm x(t_0))}{k!} (t-t_0)^k + L_f^mh(\bm x(t_0)) \geq 0, \\ \xi\in(t_0, t),
\end{aligned}
\end{equation}
where $\frac{m!}{p(\xi)(t-t_0)^m}$ is the corresponding adaptive term. The analysis is similar to that of the first-order case.

However, the high-order TLCe is different from the high-order barrier certificate/function as the high-order TLCe may be the imaginary value high-order barrier certificate/function. Specifically, let's consider the second order case (i.e., i.e., $m=2$ in (\ref{eqn:tlce})), the corresponding TLCe (\ref{eqn:htlce2}) becomes:
\begin{equation} \label{eqn:htlce3}\begin{aligned}
     \frac{2}{p(\xi)(t-t_0)^2}(h(\bm x(t_0)) + L_fh(\bm x(t_0))(t-t_0)) \\+ L_f^2h(\bm x(t_0))\geq 0,  \xi\in(t_0, t),
\end{aligned}
\end{equation}

The high-order barrier function with linear class $\mathcal{K}$ is given by
\begin{equation} \label{eqn:hocbf2}
    L_f^2h(\bm x(t_0)) + (p_1 + p_2)L_fh(\bm x(t_0)) + p_1p_2h(\bm x(t_0)) \geq 0,
\end{equation}
where $t_0\in[0,\infty)$ and $p_1 > 0, p_2 > 0$. Comparing (\ref{eqn:hocbf2}) and (\ref{eqn:htlce3}), we have that 
\begin{equation}
    p_1 + p_2 = \frac{2}{p(\xi)(t-t_0)}, \quad p_1p_2 = \frac{2}{p(\xi)(t-t_0)^2},
\end{equation}
Solving the last two equations, we have that
\begin{equation}
    p_1 = \frac{1-\sqrt{1-2p(\xi)}}{p(\xi)(t-t_0)}, \quad p_2 = \frac{1+\sqrt{1-2p(\xi)}}{p(\xi)(t-t_0)},
\end{equation}
In the above, if $p(\xi) > \frac{1}{2}$, then $p_1, p_2$ are two imaginary numbers. This means the TLCe can extend the HOBF to the imaginary value domain, while HOBF is restricted to the real value domain (i.e., class $K$ functions are real-valued functions). Similar analysis can be applied to the higher-order case of TLCe.
\end{proof}

In fact, the TLCe is more general than existing BF/HOBF \cite{xiao2021high} as we need to have assumptions that 
$L_f^mh(\bm x(t_0))\ne 0$ and $p(\xi) > 0$ in order to reformulate a TLCe into a BF/HOBF. As TLCe extends HOBF into the imaginary value domain, the proposed TLCe pushes the boundary of barrier certificate. 

The existing of a BF/HOBF is only a sufficient condition for the safety of system (\ref{eqn:sys}). The reason that a BF/HOBF is not necessary for the safety of system (\ref{eqn:sys}) is mainly limited by the class $\mathcal{K}$ function. However, an adaptive BF/HOBF is indeed necessary and sufficient for the safety of system (\ref{eqn:sys}) since it relaxes the class $\mathcal{K}$ function by an adaptive penalty term \cite{xiao2021adaptive}, as also shown by Prop. \ref{prop:tlce}. The existence of a TLCe is a necessary and sufficient condition for the safety of system (\ref{eqn:sys}), as shown in the following theorem.

\begin{theorem}\label{thm:nesu}
    The existence of a TLCe as in Def. \ref{def:tlce} is a necessary and sufficient condition for the forward invariance of the set $C$ (i.e., the safety) for system (\ref{eqn:sys}).
\end{theorem}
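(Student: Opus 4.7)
The plan is to establish the two directions separately, with sufficiency being essentially a restatement of an earlier result and necessity following directly from Taylor's theorem applied to the safe trajectory.

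For the sufficiency direction, I would simply invoke Theorem \ref{thm:tlce}: if a TLCe $h$ (or more generally $b$) exists on $C$, then the forward invariance of $C$ under system (\ref{eqn:sys}) has already been proved, so no new argument is required.

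For the necessity direction, the plan is to show that the safety function $h$ defining $C$ itself serves as a TLCe whenever $C$ is forward invariant. Assume $C$ is forward invariant and take any $\bm x(t_0) \in C$. By forward invariance, the trajectory stays inside $C$, so $h(\bm x(t)) \geq 0$ for every $t \geq t_0$. Since $h$ is assumed $r$ times differentiable and the solution of (\ref{eqn:sys}) is forward complete and smooth enough along $f$, Theorem \ref{thm:TL} applies to the scalar function $t \mapsto h(\bm x(t))$: for each $t > t_0$ there exists $\xi \in (t_0, t)$ such that
\begin{equation}
h(\bm x(t)) = \sum_{k=0}^{m-1} \frac{L_f^k h(\bm x(t_0))}{k!}(t-t_0)^k + \frac{L_f^m h(\bm x(\xi))}{m!}(t-t_0)^m.
\end{equation}
The left-hand side is non-negative by forward invariance, hence the right-hand side is non-negative at this particular $\xi$, which is exactly condition (\ref{eqn:tlce}) in Definition \ref{def:tlce}. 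Since $t_0 \in [0,\infty)$ and $\bm x(t_0) \in C$ were arbitrary, $h$ qualifies as a TLCe of order $m$, completing the necessity direction.

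The main subtlety, and where I would expect any careful reader to push back, is the quantifier on $\xi$ in the TLCe definition. Theorem \ref{thm:TL} yields existence of a $\xi$ (for each $t$, $t_0$) for which the Lagrange form is an exact equality with $h(\bm x(t))$, so the ``necessary'' direction only produces a TLCe in the existential reading of (\ref{eqn:tlce}). I would explicitly note that this is the reading consistent with Taylor's theorem as stated in Theorem \ref{thm:TL}, and that under this reading the sufficiency argument in Theorem \ref{thm:tlce} also only needs the same existential $\xi$, so the two directions match. A minor secondary point is ensuring enough regularity of $h \circ \bm x(\cdot)$ to invoke Taylor's theorem up to order $m$; this follows from the standing smoothness assumption on $h$ and the Lipschitz dynamics $f$, which guarantee the required chain-rule expressions $h^{(k)}(t) = L_f^k h(\bm x(t))$.
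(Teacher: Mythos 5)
Your proposal is correct and follows essentially the same route as the paper: both rest on the observation that, at the mean-value point $\xi$ supplied by Theorem~\ref{thm:TL}, the Taylor--Lagrange expression equals $h(\bm x(t))$ exactly, so nonnegativity of one is equivalent to nonnegativity of the other, giving both directions at once. Your explicit discussion of the quantifier on $\xi$ is a welcome clarification that the paper's one-line equivalence glosses over, but it does not change the substance of the argument.
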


\begin{proof}
     By Thm. \ref{thm:tlce}, we have that 
    \begin{equation}
\begin{aligned}
    h(t) = \sum_{k=0}^{m-1} \frac{h^{(k)}(t_0)}{k!} (t-t_0)^k + \frac{h^{(m)}(\xi)}{m!} (t-t_0)^m \\= \sum_{k=0}^{m-1} \frac{L_f^kh(\bm x(t_0))}{k!} (t-t_0)^k + \frac{L_f^mh(\bm x(\xi))}{m!} (t-t_0)^m, \\t_0\in[0,\infty), \xi\in(t_0, t),
\end{aligned}
\end{equation}

Therefore, we have that $h(\bm x(t)) = h(t)\geq 0 \Leftrightarrow$ the existence of a TLCe, for all $t$, and existence of a TLCe is a necessary and sufficient condition for the forward invariance of the set $C$ for system (\ref{eqn:sys}).
\end{proof}

\subsection{Taylor-Lagrange Control}

In this section, we consider the safety for affine control system (\ref{eqn:affine}).

We define a safe set $C$ as in (\ref{eqn:C-1}) for a safety constraint $h(\bm x)\geq 0$ that has relative degree $m$ w.r.t system (\ref{eqn:affine}). Then we propose the following definition for a Taylor-Lagrange control:
\begin{definition} [Taylor-Lagrange Control]\label{def:tlc}
    The $r\in\mathbb{N}$ times differentiable function $h:\mathbb{R}^n\rightarrow \mathbb{R}$ is a Taylor-Lagrange Control (TLC) function of relative degree $m$ for system (\ref{eqn:affine}) if
    \begin{equation} \label{eqn:tlc}
       \begin{aligned}
     \sup_{\bm u(\xi)\in U}\left[\sum_{k=0}^{m-1} \frac{L_f^kh(\bm x(t_0))}{k!} (t-t_0)^k + \frac{L_f^mh(\bm x(\xi))}{m!} (t-t_0)^m\right. \\\left.\!+ \frac{L_gL_f^{m-1}h(\bm x(\xi))\bm u(\xi)}{m!} (t\!-\!t_0)^m\right] \!\geq\! 0, t_0\in[0,\infty), \xi\in(t_0, t),
\end{aligned}
    \end{equation} for all $\bm x(t_0)\in C$,
    where $L_gh$ denotes the Lie derivative of $h$ along $g$.
\end{definition}
The interpretation of a TLC is similar to that of a TLCe, the difference is that we expand the safety constraint eventually to make the control input $\bm u(\xi),\xi\in(t_0, t)$ show up for the control system (\ref{eqn:affine}) in a TLC function.

Given a TLC as in Def. \ref{def:tlc}, we can define a state feedback controller that satisfies (\ref{eqn:tlc}):
\begin{equation}
\begin{aligned}
    K_{tlc}(\bm x(t_0), \bm x(\xi)) = \{\bm u(\xi)\in U:  \sum_{k=0}^{m-1} \frac{L_f^kh(\bm x(t_0))}{k!} (t-t_0)^k \\+ \frac{L_f^mh(\bm x(\xi))}{m!} (t-t_0)^m + \frac{L_gL_f^{m-1}h(\bm x(\xi))\bm u(\xi)}{m!} (t\!-\!t_0)^m \\\geq 0, t_0\in[0,\infty), \xi\in(t_0, t)
    \}.
\end{aligned}
\end{equation}

\begin{theorem}
\label{thm:tlc}
    Given a TLC function $h(\bm x)$ from Def. \ref{def:tlc} with the associated safe set defined as in (\ref{eqn:C-1}), if $h(\bm x(t_0))\geq 0$, then any Lipschitz continuous controller $\bm u(\xi)\in K_{tlc}(\bm x(t_0), \bm x(\xi)), \forall t_0\in[0,\infty), \xi\in(t_0,t), t>t_0$ renders the set C forward invariant for system (\ref{eqn:affine}).
\end{theorem}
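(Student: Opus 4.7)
The plan is to mirror the structure of the proof of Theorem \ref{thm:tlce}, but now tracking the point at which the control input first enters a Lie derivative of $h$. First, I would treat the closed-loop system $\dot{\bm x} = f(\bm x) + g(\bm x)\bm u$ as an explicit function of time once a (Lipschitz) $\bm u(\cdot)\in K_{tlc}$ has been fixed, so that $h(\bm x(t))$ becomes a scalar function $h(t)$ that is $m$ times differentiable on $[t_0,\infty)$. This is what allows me to appeal to Theorem \ref{thm:TL} directly.

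Next, I would invoke Taylor's theorem with Lagrange remainder at order $m$, centered at $t_0$, to write
\begin{equation*}
h(\bm x(t)) = \sum_{k=0}^{m-1}\frac{h^{(k)}(t_0)}{k!}(t-t_0)^k + \frac{h^{(m)}(\xi)}{m!}(t-t_0)^m,\quad \xi\in(t_0,t).
\end{equation*}
The crux is to identify these time derivatives with the Lie-derivative expressions that appear in Def.~\ref{def:tlc}. Since $h$ has relative degree $m$ with respect to (\ref{eqn:affine}), by definition $L_gL_f^{k-1}h(\bm x)\equiv 0$ for $k=1,\dots,m-1$, so along any closed-loop trajectory $h^{(k)}(\tau) = L_f^k h(\bm x(\tau))$ for $k\le m-1$, independently of the control. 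At order $m$, however, the control first appears, giving $h^{(m)}(\tau) = L_f^m h(\bm x(\tau)) + L_gL_f^{m-1}h(\bm x(\tau))\,\bm u(\tau)$. Substituting these identifications into the Taylor formula yields exactly the left-hand side of the inequality defining $K_{tlc}$, evaluated at $\xi$.

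From here the conclusion is immediate: by hypothesis $h(\bm x(t_0))\geq 0$, and because $\bm u(\xi)\in K_{tlc}(\bm x(t_0),\bm x(\xi))$ for every relevant $\xi$, the resulting Taylor expansion is nonnegative, so $h(\bm x(t))\geq 0$ for all $t\geq t_0$. Varying $t_0$ along the trajectory (or equivalently applying the argument on a maximal interval and precluding a first exit time from $C$) gives forward invariance of $C$ under (\ref{eqn:affine}).

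The main obstacle is not the Taylor step itself but the relative-degree bookkeeping: one must justify that along the closed loop the first $m-1$ time derivatives of $h$ coincide with $L_f^k h$ even though the control is active, and that the $m$-th derivative is evaluated at the intermediate point $\xi$ with the same $\bm u(\xi)$ used in the $K_{tlc}$ constraint. A secondary subtlety is that Theorem \ref{thm:TL} requires $m$-fold differentiability of $h(t)$, which relies on Lipschitz continuity of $\bm u$ (and smoothness of $f,g,h$) to ensure the closed-loop trajectory is sufficiently regular; I would state this as a standing assumption inherited from Def.~\ref{def:tlc}. Once these two points are carefully articulated, the rest is a direct application of Theorem \ref{thm:TL} and the definition of $K_{tlc}$.
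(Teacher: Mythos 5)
Your proposal is correct and follows essentially the same route as the paper's proof: apply Taylor's theorem with Lagrange remainder to $h(t)$ along the closed-loop trajectory, identify the time derivatives with the Lie-derivative terms in Def.~\ref{def:tlc}, and conclude from $\bm u(\xi)\in K_{tlc}(\bm x(t_0),\bm x(\xi))$ that the expansion, and hence $h(\bm x(t))$, is nonnegative. You additionally spell out the relative-degree bookkeeping ($L_gL_f^{k-1}h\equiv 0$ for $k\le m-1$, so the control enters only at order $m$) that the paper's proof leaves implicit when it asserts the equivalence between the TLC constraint and the time-derivative expansion.
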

\begin{proof}
Given a TLC as in Def. \ref{def:tlc}, any controller that satisfies the TLC constraint (\ref{eqn:tlc}) is equivalent to the satisfaction of the following:
\begin{equation}
       \begin{aligned}
    \sum_{k=0}^{m-1} \frac{h^{(k)}(\bm x(t_0))}{k!} (t-t_0)^k + \frac{h^{(m)}(\bm x(\xi))}{m!} (t-t_0)^m \geq 0, \\t_0\in[0,\infty), \xi\in(t_0, t),
\end{aligned}
    \end{equation}

By Thm. \ref{thm:tlce}, we have that
 \begin{equation}
\begin{aligned}
    h(t) = \sum_{k=0}^{m-1} \frac{h^{(k)}(t_0)}{k!} (t-t_0)^k + \frac{h^{(m)}(\xi)}{m!} (t-t_0)^m.
\end{aligned}
\end{equation}

Combining the last two equations, we have that $h(\bm x(t)) = h(t)\geq 0, \forall t$. Therefore, the set $C$ is forward invariant for system (\ref{eqn:affine}).
\end{proof}

The proposed TLC also has some connections with existing CBFs \cite{Ames2017} and HOCBFs \cite{xiao2021high}, and it further pushes their boundaries. The following proposition shows their relationship:

\begin{proposition} \label{lem:tlc-cbf}
    Assume $L_f^mh(\bm x(t_0)) + L_gL_f^{m-1}h(\bm x(t_0)) \bm u(t_0) \ne 0$ and $sgn(L_f^mh(\bm x(\xi)) + L_gL_f^{m-1}h(\bm x(\xi)) \bm u(\xi)) = sgn(L_f^mh(\bm x(t_0)) + L_gL_f^{m-1}h(\bm x(t_0)) \bm u(t_0))$ for a TLC in Def. \ref{def:tlc}. The first-order TLC  degenerates to an adaptive CBF with a linear class $\mathcal{K}$ function, and the high-order TLC extends the HOCBF to the imaginary value domain.
\end{proposition}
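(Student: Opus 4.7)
The plan is to mirror the structure of the proof of Proposition \ref{prop:tlce} (the TLCe version) while carrying along the extra control terms $L_gL_f^{m-1}h(\bm x(\xi))\bm u(\xi)$ that are now present in the highest-order Lagrange remainder. The nonzero and sign assumptions in the hypothesis are precisely the control-system analogues of those used in Proposition \ref{prop:tlce}: they let me safely divide by the highest-order Lie bracket plus control term and keep the resulting ratio strictly positive.

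\textbf{First-order case ($m=1$).} I would start from the TLC inequality (\ref{eqn:tlc}) with $m=1$, which reads $h(\bm x(t_0)) + [L_fh(\bm x(\xi)) + L_gh(\bm x(\xi))\bm u(\xi)](t-t_0)\geq 0$. Using the nonzero assumption at $t_0$, I would factor out $L_fh(\bm x(t_0)) + L_gh(\bm x(t_0))\bm u(t_0)$ from the derivative term and introduce
\begin{equation}
    p(\xi) = \frac{L_fh(\bm x(\xi)) + L_gh(\bm x(\xi))\bm u(\xi)}{L_fh(\bm x(t_0)) + L_gh(\bm x(t_0))\bm u(t_0)},
\end{equation}
which is positive by the sign hypothesis. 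Dividing through by $p(\xi)(t-t_0)>0$ then yields the adaptive-CBF form with a linear class $\mathcal{K}$ function and adaptive penalty $\frac{1}{p(\xi)(t-t_0)}$, exactly paralleling (\ref{eqn:acbf2}). The only change from the TLCe case is that the Lie derivative is replaced by its controlled counterpart $L_fh + L_gh\,\bm u$, so the argument goes through verbatim.

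\textbf{High-order case ($m>1$).} I would repeat the factoring, this time pulling $L_f^mh(\bm x(t_0)) + L_gL_f^{m-1}h(\bm x(t_0))\bm u(t_0)$ out of the remainder and defining the analogous ratio $p(\xi)$ on the controlled derivatives. Dividing by $\frac{p(\xi)(t-t_0)^m}{m!}$ delivers the high-order adaptive CBF inequality mirroring (\ref{eqn:htlce2}). To expose the complex extension, I would specialise to $m=2$ and compare the resulting inequality against the standard HOCBF form with linear class $\mathcal{K}$ functions
\begin{equation}
    L_f^2 h(\bm x(t_0)) + L_gL_fh(\bm x(t_0))\bm u(t_0) + (p_1+p_2)[L_fh(\bm x(t_0))] + p_1 p_2\, h(\bm x(t_0)) \geq 0.
\end{equation}
Matching coefficients gives the same pair of equations $p_1+p_2 = \frac{2}{p(\xi)(t-t_0)}$ and $p_1 p_2 = \frac{2}{p(\xi)(t-t_0)^2}$ as in Proposition \ref{prop:tlce}, whose solutions become complex precisely when $p(\xi)>\tfrac{1}{2}$. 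This identifies the regime in which the equivalent HOCBF penalties are necessarily imaginary, establishing the claimed extension to the complex domain.

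\textbf{Main obstacle.} The computations themselves are essentially mechanical once the correct factoring is chosen, so the real care is in two places: (i) verifying that the nonzero and sign hypotheses on the combined term $L_f^m h + L_gL_f^{m-1}h\,\bm u$ (rather than just $L_f^m h$) justify the division by $p(\xi)$; and (ii) matching the controlled TLC inequality against an HOCBF whose class $\mathcal{K}$ functions act on $\psi_i$ built via (\ref{eqn:functions}), so that the coefficient matching in the $m=2$ comparison cleanly isolates $p_1 + p_2$ and $p_1 p_2$. Once those two points are handled, the conclusion about imaginary $p_1, p_2$ follows from the same discriminant argument as in Proposition \ref{prop:tlce}, and a brief remark suffices to indicate that the higher-order cases ($m\geq 3$) admit an analogous discriminant-based extension.
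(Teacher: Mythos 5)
Your proposal is correct and follows essentially the same route as the paper's proof: the first-order case factors out the controlled derivative at $t_0$, defines the positive ratio $p(\xi)$, and divides to obtain the adaptive-CBF form with penalty $\frac{1}{p(\xi)(t-t_0)}$, while the high-order case is reduced to the $m=2$ coefficient matching $p_1+p_2 = \frac{2}{p(\xi)(t-t_0)}$, $p_1p_2 = \frac{2}{p(\xi)(t-t_0)^2}$ whose roots become complex for $p(\xi)>\tfrac{1}{2}$, exactly as in the paper (which simply defers the details to Proposition \ref{prop:tlce}). No gaps or deviations worth noting.
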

\begin{proof}
\textbf{First-order case.} When the TLC has an order $m = 1$, the corresponding first-order TLC (\ref{eqn:tlc}) becomes:
\begin{equation}
\label{eqn:1tlc}
    h(\bm x(t_0)) + L_fh(\bm x(\xi))(t-t_0) + L_gh(\bm x(\xi))\bm u(\xi)(t-t_0)\geq 0, 
\end{equation}
where $\xi\in(t_0, t).$

Since $L_f h(\bm x(t_0)) + L_gh(\bm x(t_0))\ne 0$, the above equation can be rewritten as
\begin{equation} \label{eqn:acbf21}
\begin{aligned}
    h(\bm x(t_0)) + L_fh(\bm x(t_0))p(\xi)(t-t_0) \\+ L_gh(\bm x(t_0))\bm u(t_0)p(\xi)(t-t_0)\geq 0,  \xi\in(t_0, t),
    \end{aligned}
\end{equation}
where
\begin{equation}
    p(\xi) = \frac{L_fh(\bm x(\xi))+ L_gh(\bm x(\xi))\bm u(\xi)}{L_fh(\bm x(t_0)) + L_gh(\bm x(t_0))\bm u(t_0)}.
\end{equation}

As we assume $p(\xi) > 0$, equation (\ref{eqn:acbf21}) is equivalent to the adaptive CBF \cite{xiao2021adaptive} with a linear class $\mathcal{K}$ function:
\begin{equation} \label{eqn:acbf22}
    \frac{1}{p(\xi)(t-t_0)}h(\bm x(t_0)) + L_fh(\bm x(t_0)) + L_gh(\bm x(t_0))\bm u(t_0)\geq 0,  
\end{equation}
where $\xi\in(t_0, t), \frac{1}{p(\xi)(t-t_0)}$ is the adaptive penalty function depending on the future time $\xi\in(t_0, t)$ when we consider $t_0$ as the current time. When varying $t_0$ between $[0,\infty)$, the above equation offers an adaptive CBF \cite{xiao2021adaptive}. The only difference between the first-order TLC (\ref{eqn:1tlc}) and the adaptive CBF lies in the fact that adaptive term $\frac{1}{p(\xi)(t-t_0)}$ in the first-order TLC depends on the future time $\xi\in(t_0,t)$, while the adaptive term in the adaptive CBF depends on the current time. 

\textbf{High-order case.} The high-order case is similar to that of Prop. \ref{prop:tlce}, and thus is omitted. The high-order TLC can also be related to a HOCBF. For example, in the second order case, the TLC corresponds to  $L_f^2h(\bm x(t_0)) + L_gL_fh(\bm x(t_0))\bm u(t_0) + (p_1 + p_2)L_fh(\bm x(t_0)) + p_1p_2h(\bm x(t_0)) \geq 0,$ where
\begin{equation}
    p_1 = \frac{1-\sqrt{1-2p(\xi)}}{p(\xi)(t-t_0)}, \quad p_2 = \frac{1+\sqrt{1-2p(\xi)}}{p(\xi)(t-t_0)},
\end{equation}
In the above, if $p(\xi) > \frac{1}{2}$, then $p_1, p_2$ are two imaginary numbers. This means the TLC can extend the HOCBF to the imaginary value domain, while HOCBF is restricted to the real value domain (i.e., class $K$ functions are real-valued functions). Similar analysis can be applied to the higher-order case of TLC.
\end{proof}

In HOCBFs, we need to explicitly define a sequence of safe sets, and the number of safe sets equals the relative degree of the safety constraint. However, we do not need to explicitly define those safe sets in the proposed TLC method.
The existing of a CBF/HOCBF is only a sufficient condition for the safety of system (\ref{eqn:affine}). While the existing of a TLC is a necessary and sufficient condition for the safety of system (\ref{eqn:affine}), as shown in the following theorem.

\begin{theorem} \label{thm:ns}
    The existence of a TLC function as in Def. \ref{def:tlc} is a necessary and sufficient condition for the forward invariance of the set $C$ (i.e., the safety) for system (\ref{eqn:affine}).
\end{theorem}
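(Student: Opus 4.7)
The plan is to mirror the structure of the proof of Thm. \ref{thm:nesu}, splitting the argument into sufficiency and necessity. For sufficiency, I would simply cite Thm. \ref{thm:tlc}: if a TLC function exists then by definition the supremum in (\ref{eqn:tlc}) is attained by some admissible $\bm u(\xi)\in U$, and taking any Lipschitz selector $\bm u(\xi)\in K_{tlc}(\bm x(t_0),\bm x(\xi))$ renders $C$ forward invariant for system (\ref{eqn:affine}). This direction is essentially a one-line invocation of the previous theorem.

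For necessity, I would fix an arbitrary trajectory $\bm x(t)$ of the closed-loop system (\ref{eqn:affine}) starting at $\bm x(t_0)\in C$ under a Lipschitz controller $\bm u(\cdot)$ that keeps $C$ forward invariant, so that $h(\bm x(t))\geq 0$ for all $t\geq t_0$. Define $\tilde h(t):=h(\bm x(t))$. Since $h$ is $r$-times differentiable and the trajectory is absolutely continuous, $\tilde h$ is $m$-times differentiable along the trajectory, so Thm. \ref{thm:TL} applies and yields
\begin{equation}
\tilde h(t)=\sum_{k=0}^{m-1}\frac{\tilde h^{(k)}(t_0)}{k!}(t-t_0)^k+\frac{\tilde h^{(m)}(\xi)}{m!}(t-t_0)^m,
\end{equation}
for some $\xi\in(t_0,t)$.

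The next step is to translate the time derivatives along the trajectory into Lie derivatives, using the relative-degree-$m$ hypothesis. For $k=0,\dots,m-1$ we have $\tilde h^{(k)}(t_0)=L_f^k h(\bm x(t_0))$ because $L_g L_f^{j}h\equiv 0$ for $j<m-1$. For the remainder term at the intermediate point $\xi$, differentiating once more gives $\tilde h^{(m)}(\xi)=L_f^m h(\bm x(\xi))+L_g L_f^{m-1}h(\bm x(\xi))\,\bm u(\xi)$, which is exactly the bracket inside the supremum in (\ref{eqn:tlc}) evaluated at $\xi$. Substituting these identities into the Taylor expansion and using $\tilde h(t)\geq 0$, the existing controller $\bm u(\xi)$ already realizes a nonnegative value of the bracketed expression, so the supremum over $\bm u\in U$ is certainly nonnegative, which is precisely condition (\ref{eqn:tlc}). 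Hence $h$ is a TLC function.

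The hard part is really just bookkeeping rather than a conceptual obstacle: I need to make sure that the relative-degree-$m$ condition is used cleanly so that no control input appears in the derivatives $\tilde h^{(k)}$ for $k<m$ (so that evaluating them at $t_0$ gives the Lie-derivative expression independent of $\bm u$), while the $m$-th derivative is expanded with the control input explicit. I would also note that the equivalence $h(\bm x(t))\geq 0 \Leftrightarrow$ (\ref{eqn:tlc}) holds pointwise in $t$ and $t_0$, which is what makes the condition both necessary and sufficient, in direct analogy with Thm. \ref{thm:nesu}.
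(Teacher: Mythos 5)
Your proposal is correct and follows essentially the same route as the paper, which simply defers to the argument of Thm.~\ref{thm:nesu}: the Taylor--Lagrange identity makes $h(\bm x(t))\geq 0$ pointwise equivalent to nonnegativity of the bracketed expression in (\ref{eqn:tlc}), giving both directions at once. Your version is in fact more careful than the paper's one-line proof, since you explicitly track how the relative-degree-$m$ hypothesis keeps $\bm u$ out of the lower-order derivatives and makes it appear only in $\tilde h^{(m)}(\xi)$.
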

\begin{proof}
The proof is similar to that of Thm. \ref{thm:nesu}.
\end{proof}

\begin{remark}[Extension to time-varying systems and constraints]
When system (\ref{eqn:affine}) or the safety constraint $h(\bm x, t)\geq 0$ is time-varying (i.e., an explicit function of time), the proposed TLC works similarly as we only need to additionally consider the partial derivative of $h(\bm x, t)$ with respect to time in $h^{(k)}, k\in \{0,\dots, m\}$. This is helpful in temporal-spatial specifications \cite{srinivasan2018control}.
\end{remark}

\begin{remark}[Comparison with adaptive CBFs \cite{xiao2021adaptive}]
The proposed TLC degenerates to an adaptive CBF in the first-order case. However, the TLC is much more simpler in the high-order case. In a high-order adaptive CBF, we have to define auxiliary dynamics for the penalty functions as they will be differentiated multiple times. This makes adaptive CBF much more complicated than the proposed TLC. Moreover, there may be different penalty functions on different orders in adaptive CBFs, and there will also be more decision variables. Therefore, the computational complexity is also higher in an adaptive CBF than the proposed TLC.
\end{remark}

The necessity and sufficiency of the safety in a TLC/TLCe depends on the $\xi\in(t_0, t)$ that is generally hard to identify. However, this problem can be addressed in optimal control problems through zero-order hold (ZOH) implementation, as shown in the following sections.

\section{Taylor-Lagrange Control for Stability}
\label{sec:stability}

We also propose to employ the proposed Taylor-Lagrange Control to ensure the stability of a system. In the safety-critical control case, we usually assume that the safety constraint is initially satisfied, i.e., $h(\bm x(t_0)) \geq 0$. If this assumption is not satisfied, then we may define a new function $V(\bm x) = -h(x)$ to stabilize the system to the safe set. Without loss of generality, we consider a positive definite function $V(\bm x)$ to study the stability of system (\ref{eqn:affine}).

We formally define a Taylor-Lagrange Stability (TLS) function as follows:
\begin{definition}  [Taylor-Lagrange Stability]\label{def:tls}
	 A continuously differentiable function $V: \mathbb{R}^n\rightarrow \mathbb{R}$ is a Taylor-Lagrange Stability (TLS) function of relative degree $m$ for system (\ref{eqn:affine}) if there exist constants $c_1 >0, c_2>0$  such that, for $\forall \bm x\in \mathbb{R}^n$,
	\begin{equation}\label{eqn:tls}
    \begin{aligned}
    &c_1||\bm x||^2 \leq V(\bm x) \leq c_2 ||\bm x||^2,\\
	&\inf_{\bm u(\xi)\in U}\left[\sum_{k=0}^{m-1} \frac{L_f^kV(\bm x(t_0))}{k!} (t-t_0)^k + \frac{L_f^mV(\bm x(\xi))}{m!} (t-t_0)^m\right. \\&\left.\!+ \frac{L_gL_f^{m-1}V(\bm x(\xi))\bm u(\xi)}{m!} (t\!-\!t_0)^m\right] \!\leq\! 0, t_0\in[0,\infty), \xi\in(t_0, t),
    \end{aligned}
	\end{equation}
\end{definition}

Given a TLS as in Def. \ref{def:tls}, we can define a state feedback controller that satisfies (\ref{eqn:tls}):
\begin{equation}
\begin{aligned}
    K_{tls}(\bm x(t_0), \bm x(\xi)) = \{\bm u(\xi)\in U:  \sum_{k=0}^{m-1} \frac{L_f^kV(\bm x(t_0))}{k!} (t-t_0)^k \\+ \frac{L_f^mV(\bm x(\xi))}{m!} (t-t_0)^m + \frac{L_gL_f^{m-1}V(\bm x(\xi))\bm u(\xi)}{m!} (t\!-\!t_0)^m \\\leq 0, t_0\in[0,\infty), \xi\in(t_0, t)
    \}.
\end{aligned}
\end{equation}

\begin{theorem}
\label{thm:tls}
    Given a TLS function $V(\bm x)$ from Def. \ref{def:tls}, any Lipschitz continuous controller $\bm u(\xi)\in K_{tls}(\bm x(t_0), \bm x(\xi)), \forall t_0\in[0,\infty), \xi\in(t_0,t), t>t_0$ stabilizes the system (\ref{eqn:affine}) to the origin.
\end{theorem}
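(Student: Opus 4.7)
The plan is to follow the template of the proof of Theorem \ref{thm:tlc}, mutatis mutandis, with the TLS constraint (\ref{eqn:tls}) replacing the TLC constraint, the sign of the inequality flipped, and the positive-definite sandwich $c_1\|\bm x\|^2 \leq V(\bm x) \leq c_2\|\bm x\|^2$ used at the end to pass from a bound on $V$ to a bound on the state. Structurally, nothing about the Taylor-expansion argument needs to change; the only substantive difference is that the target is the zero sublevel set of $V$ rather than the zero superlevel set of $h$.

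Concretely, I would first apply Theorem \ref{thm:TL} to the composite map $t \mapsto V(\bm x(t))$ along the closed loop of (\ref{eqn:affine}) driven by any Lipschitz $\bm u(\xi) \in K_{tls}(\bm x(t_0), \bm x(\xi))$. Using the relative-degree-$m$ property of $V$ to identify $V^{(k)}(t) = L_f^k V(\bm x(t))$ for $k < m$ and $V^{(m)}(t) = L_f^m V(\bm x) + L_g L_f^{m-1} V(\bm x)\,\bm u$ yields an exact Lagrange representation of $V(\bm x(t))$ in terms of the bracketed expression that appears inside the infimum in (\ref{eqn:tls}), evaluated at some $\xi \in (t_0, t)$. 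The hypothesis $\bm u(\xi) \in K_{tls}$ then forces $V(\bm x(t)) \leq 0$ for every $t > t_0$. Combined with the lower bound $V(\bm x) \geq c_1\|\bm x\|^2 \geq 0$ from Definition \ref{def:tls}, this gives $V(\bm x(t)) = 0$ and hence $\|\bm x(t)\|^2 \leq V(\bm x(t))/c_1 = 0$; since $t_0 \in [0,\infty)$ is arbitrary, this extends to every $t \geq 0$. The quadratic upper bound $V(\bm x) \leq c_2\|\bm x\|^2$ would then be invoked to render the convergence uniform on compact level sets of $V$, recovering a standard stability conclusion.

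The main obstacle I anticipate is the one that was left tacit in Theorem \ref{thm:tlc}: the Lagrange point $\xi$ arises from the mean-value step inside Taylor's theorem and is not known in advance, so the controller $\bm u(\xi)$ must satisfy the TLS inequality without explicit knowledge of $\xi$. The cleanest resolution is to transplant the zero-order-hold and event-triggered machinery developed in Section \ref{sec:oc} for the safety case to the stability case, so that $\bm u$ is held constant between sampling instants and the inequality is enforced through a feasible quadratic program at each instant. A secondary technical point is to verify that $K_{tls}$ is nonempty and admits a Lipschitz selection inside the control constraint set $U$, which should be postulated in parallel with the analogous implicit nondegeneracy condition ($L_g L_f^{m-1} V \neq 0$ near the origin) that appears in Definition \ref{def:hocbf}; once this is in hand, the existence of the required Lipschitz $\bm u(\xi)$ is automatic and the Taylor-expansion step closes the argument exactly as in Theorem \ref{thm:tlc}.
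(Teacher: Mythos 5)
Your proposal matches the paper's own proof: the paper likewise treats $V(\bm x(t))$ as a function of time, applies Theorem \ref{thm:TL} to obtain the exact Lagrange representation, uses the $K_{tls}$ membership to conclude $V(\bm x(t))\leq 0$, and then invokes positive definiteness of $V$ to conclude stabilization to the origin. The additional remarks you make about the unknown Lagrange point $\xi$ and the Lipschitz selection from $K_{tls}$ are reasonable observations but are not part of the paper's argument, which leaves those points tacit exactly as in Theorem \ref{thm:tlc}.
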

\begin{proof}
Given a TLS as in Def. \ref{def:tls}, any controller that satisfies the TLS constraint (\ref{eqn:tls}) is equivalent to the satisfaction of the following:
\begin{equation}
       \begin{aligned}
    \sum_{k=0}^{m-1} \frac{V^{(k)}(\bm x(t_0))}{k!} (t-t_0)^k + \frac{V^{(m)}(\bm x(\xi))}{m!} (t-t_0)^m \leq 0, \\t_0\in[0,\infty), \xi\in(t_0, t),
\end{aligned}
    \end{equation}

    Following explicit dynamics analysis \cite{ascher1997implicit}, the evolution of the state $\bm x$ of system (\ref{eqn:sys}) can be represented as an explicit function of time. Therefore, the TLS function $V(\bm x)$ can be rewritten as $V(t)$.
By Thm. \ref{thm:TL}, we have that
 \begin{equation}
\begin{aligned}
    V(\bm x(t)) &= V(t) \\&= \sum_{k=0}^{m-1} \frac{V^{(k)}(t_0)}{k!} (t-t_0)^k + \frac{V^{(m)}(\xi)}{m!} (t-t_0)^m \leq 0.
\end{aligned}
\end{equation}

Since $V(\bm x(t))$ is positive definite, we have that the system (\ref{eqn:affine}) is stabilized to the origin if the controller satisfies the TLS constraint (\ref{eqn:tls}).
\end{proof}

The proposed TLS has some connections with existing CLFs \cite{Aaron2012} and HOCLFs \cite{xiao2021adaptive}, and it further pushes their boundaries. The following proposition shows their relationship:

\begin{proposition} \label{lem:tls-clf}
    Assume $L_f^mV(\bm x(t_0)) + L_gL_f^{m-1}V(\bm x(t_0)) \bm u(t_0) \ne 0$ and $sgn(L_f^mV(\bm x(\xi)) + L_gL_f^{m-1}V(\bm x(\xi)) \bm u(\xi)) = sgn(L_f^mV(\bm x(t_0)) + L_gL_f^{m-1}V(\bm x(t_0)) \bm u(t_0))$ for a TLC in Def. \ref{def:tls}. The first-order TLC  degenerates to a CLF, and the high-order TLS extends the HOCLF to the imaginary value domain.
\end{proposition}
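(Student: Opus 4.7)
The plan is to mirror the proof structure of Proposition \ref{lem:tlc-cbf}, since the TLS definition differs from the TLC definition only in the direction of the inequality ($\leq 0$ for stability rather than $\geq 0$ for safety) and in the role of the function (Lyapunov-like rather than barrier-like). The algebraic manipulations are essentially identical; only the interpretation of the resulting adaptive coefficient changes, since a CLF requires the Lie derivative to dominate a positive multiple of $V(\bm x)$.

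First, I would handle the first-order case by setting $m=1$ in the TLS condition \eqref{eqn:tls}, obtaining
\[
V(\bm x(t_0)) + L_fV(\bm x(\xi))(t-t_0) + L_gV(\bm x(\xi))\bm u(\xi)(t-t_0) \leq 0.
\]
Using the non-vanishing hypothesis on $L_fV(\bm x(t_0))+L_gV(\bm x(t_0))\bm u(t_0)$, I would introduce
\[
p(\xi) = \frac{L_fV(\bm x(\xi))+L_gV(\bm x(\xi))\bm u(\xi)}{L_fV(\bm x(t_0))+L_gV(\bm x(t_0))\bm u(t_0)},
\]
which the sign hypothesis forces to be strictly positive. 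Dividing the factored inequality by $p(\xi)(t-t_0)>0$ yields
\[
\frac{1}{p(\xi)(t-t_0)}V(\bm x(t_0)) + L_fV(\bm x(t_0))+L_gV(\bm x(t_0))\bm u(t_0) \leq 0,
\]
which is exactly a CLF-type inequality with the fixed constant $c_3$ replaced by the adaptive, strictly positive penalty $\frac{1}{p(\xi)(t-t_0)}$; this is the adaptive CLF form used in \cite{xiao2021adaptive}. The only subtle point worth pointing out is that the penalty depends on the future instant $\xi$ rather than on the current time, exactly as in Prop.~\ref{lem:tlc-cbf}.

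For the high-order case, I would specialise to $m=2$ to make the imaginary-valued comparison explicit. Factoring the TLS inequality in the same manner and multiplying through by $\frac{2}{p(\xi)(t-t_0)^2}>0$ gives
\[
\frac{2}{p(\xi)(t-t_0)^2}V(\bm x(t_0)) + \frac{2}{p(\xi)(t-t_0)}L_fV(\bm x(t_0)) + L_f^2V(\bm x(t_0)) + L_gL_fV(\bm x(t_0))\bm u(t_0) \leq 0.
\]
Matching this against the second-order HOCLF form $L_f^2V(\bm x(t_0))+L_gL_fV(\bm x(t_0))\bm u(t_0)+(p_1+p_2)L_fV(\bm x(t_0))+p_1p_2V(\bm x(t_0))\leq 0$ produces the same two equations $p_1+p_2 = \tfrac{2}{p(\xi)(t-t_0)}$ and $p_1 p_2 = \tfrac{2}{p(\xi)(t-t_0)^2}$ as in Prop.~\ref{lem:tlc-cbf}, whose roots
\[
p_1 = \frac{1-\sqrt{1-2p(\xi)}}{p(\xi)(t-t_0)}, \qquad p_2 = \frac{1+\sqrt{1-2p(\xi)}}{p(\xi)(t-t_0)}
\]
become a complex conjugate pair precisely when $p(\xi)>\tfrac12$, establishing the imaginary-domain extension. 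The higher-order ($m>2$) cases follow by the same argument with a degree-$m$ polynomial in the adaptive gains.

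I do not anticipate a substantial obstacle: the main point is essentially bookkeeping, because the algebra carries over unchanged from the safety case. The only care needed is to verify that the sign-flip from $\geq 0$ to $\leq 0$ is preserved when dividing by the positive quantities $p(\xi)(t-t_0)$ and $p(\xi)(t-t_0)^2$, which is automatic. Because of this parallelism, the proof can likely be written very compactly, either as a direct analogue of the proof of Prop.~\ref{lem:tlc-cbf} or, as the authors have done for Thm.~\ref{thm:ns}, simply by referring to that earlier proof.
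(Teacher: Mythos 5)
Your proposal is correct and follows essentially the same route as the paper: the authors likewise set $m=1$, introduce the ratio $p(\xi)$ under the non-vanishing and sign hypotheses, divide by $p(\xi)(t-t_0)>0$ to recover the adaptive CLF inequality, and then handle the high-order case by matching the $m=2$ expansion against the HOCLF form to obtain the same roots $p_1,p_2$ that turn complex when $p(\xi)>\tfrac12$. No gaps.
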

\begin{proof}
\textbf{First-order case.} When the TLS has an order $m = 1$, the corresponding first-order TLC (\ref{eqn:tls}) becomes:
\begin{equation}
\label{eqn:1tls}
    V(\bm x(t_0)) + L_fV(\bm x(\xi))(t-t_0) + L_gV(\bm x(\xi))\bm u(\xi)(t-t_0)\leq 0, 
\end{equation}
where $\xi\in(t_0, t).$

Since $L_f V(\bm x(t_0)) + L_gV(\bm x(t_0))\ne 0$, the above equation can be rewritten as
\begin{equation} \label{eqn:aclf21}
\begin{aligned}
    V(\bm x(t_0)) + L_fV(\bm x(t_0))p(\xi)(t-t_0) \\+ L_gV(\bm x(t_0))\bm u(t_0)p(\xi)(t-t_0)\leq 0,  \xi\in(t_0, t),
    \end{aligned}
\end{equation}
where
\begin{equation}
    p(\xi) = \frac{L_fV(\bm x(\xi))+ L_gV(\bm x(\xi))\bm u(\xi)}{L_fV(\bm x(t_0)) + L_gV(\bm x(t_0))\bm u(t_0)}.
\end{equation}

As we assume $p(\xi) > 0$, equation (\ref{eqn:aclf21}) is equivalent to the following CLF:
\begin{equation} \label{eqn:aclf22}
    \frac{1}{p(\xi)(t-t_0)}V(\bm x(t_0)) + L_fV(\bm x(t_0)) + L_gV(\bm x(t_0))\bm u(t_0)\leq 0,  
\end{equation}
where $\xi\in(t_0, t), \frac{1}{p(\xi)(t-t_0)}$ is the adaptive penalty function depending on the future time $\xi\in(t_0, t)$ when we consider $t_0$ as the current time. When varying $t_0$ between $[0,\infty)$, the above equation offers a CLF condition. 

\textbf{High-order case.} The high-order case is similar to that of Prop. \ref{prop:tlce}, and thus is omitted. The high-order TLS can also be related to a HOCLF. For example, in the second order case, the TLS corresponds to  $L_f^2V(\bm x(t_0)) + L_gL_fV(\bm x(t_0))\bm u(t_0) + (p_1 + p_2)L_fV(\bm x(t_0)) + p_1p_2V(\bm x(t_0)) \leq 0,$ where
\begin{equation}
    p_1 = \frac{1-\sqrt{1-2p(\xi)}}{p(\xi)(t-t_0)}, \quad p_2 = \frac{1+\sqrt{1-2p(\xi)}}{p(\xi)(t-t_0)},
\end{equation}
In the above, if $p(\xi) > \frac{1}{2}$, then $p_1, p_2$ are two imaginary numbers. This means the TLS can extend the HOCLF to the imaginary value domain, while HOCLF is restricted to the real value domain (i.e., class $K$ functions are real-valued functions). Similar analysis can be applied to the higher-order case of TLS.
\end{proof}

Note that in a TLS we may only require $V(\bm x(t)) = 0$ (i.e., the TLS constraint (\ref{eqn:tls}) is an equality) instead of $V(\bm x(t)) \leq 0$ as  $V(\bm x)$ is positive definite. We define it in the form $V(\bm x(t)) \leq 0$ in order to make the method robust to errors, especially due to sampling, as shown in the next section.

\section{Optimal Control with Taylor-Lagrange}
\label{sec:oc}

Consider a constrained optimal control problem for system (\ref{eqn:affine}) with the cost defined as:
\begin{equation}\label{eqn:cost}
J(\bm u(t)) = \int_{0}^{T}\mathcal{C}(||\bm u(t)||)dt
\end{equation}
where $||\cdot||$ denotes the 2-norm of a vector. $T$ denotes the final time, and  $\mathcal{C}(\cdot)$ is a strictly increasing function of its argument (such as the energy consumption function $\mathcal{C}(||\bm u(t)||) = ||\bm u(t)||^2$).

We also wish the state $\bm x$ of system (\ref{eqn:affine}) to reach a desired state $\bm x_d\in\mathbb{R}^n$ at the final time $T$. Mathematically, we wish to achieve the following:
\begin{equation} \label{eqn:convergence}
    \min_{\bm u} ||\bm x(T) - \bm x_d||.
\end{equation}

Assume a safety constraint 
\begin{equation} \label{eqn:safety}
h(\bm x) \geq 0,
\end{equation}
with relative degree $m$ that has to be satisfied by system (\ref{eqn:affine}). 

Formally, we have the following constrained optimal control problems:
\begin{problem} \label{problem}
Our objective is to find a controller $\bm u^*$ for (\ref{eqn:affine}) by solving the following optimization:
\begin{equation}
\begin{aligned}
    \bm u^* &= \arg\min_{\bm u\in U} \int_{0}^{T}\mathcal{C}(||\bm u(t)||)dt + w||\bm x(T) - \bm x_d||,\\
    &\text{ s.t. (\ref{eqn:safety}),} 
\end{aligned}
\end{equation}
where $w > 0$. We may consider multiple safety constraints in the above, but we only list one for simplicity without loss of generality.
\end{problem}

\textbf{Approach:} We use a TLC to enforce the safety constraint (\ref{eqn:safety}), employ a TLS to enforce the state convergence objective (\ref{eqn:convergence}), and directly take the cost (\ref{eqn:cost}) as the objective to reformulate Problem \ref{problem} as a constrained optimization problem that take the control $\bm u$ as the decision variable. However, the TLC (\ref{eqn:tlc}) and TLS (\ref{eqn:tls}) depend on the control at the future time $\xi\in (t_0, t)$. We propose the following methods to address this problem.

\subsection{Zero-Order Hold Taylor-Lagrange Control}
For continuous optimal control problems, we have to implement the control in discrete time. Suppose we take $t_0\in [0, T]$ as the current implementation time instant, and $t = t_1$ as the next implementation time instant. We further assume a Zero-Order Hold (ZOH) policy for the discrete time implementation, which is widely adopted in the CBF method \cite{Ames2017}. Then we have that $\bm u(t_0) = \bm u(\xi), \xi\in (t_0, t_1)$. Finally, we have the following ZOH Taylor-Lagrange Control function definition:
\begin{definition} [Zero-Order Hold Taylor-Lagrange Control]\label{def:zoh-tlc}
    The $r\in\mathbb{N}$ times differentiable function $h:\mathbb{R}^n\rightarrow \mathbb{R}$ is a Zero-Order Hold Taylor-Lagrange Control (ZOH-TLC) function of relative degree $m$ for system (\ref{eqn:affine}) if
    \begin{equation} \label{eqn:zoh-tlc}
       \begin{aligned}
     \sup_{\bm u(t_0)\in U}\left[\sum_{k=0}^{m-1} \frac{L_f^kh(\bm x(t_0))}{k!} {\Delta t}^k + \frac{L_f^mh(\bm x(t_0))}{m!} {\Delta t}^m\right. \\\left.\!+ \frac{L_gL_f^{m-1}h(\bm x(t_0))\bm u(t_0)}{m!} {\Delta t}^m\right] \!\geq\! 0,
\end{aligned}
    \end{equation} for all $\bm x(t_0)\in C$,
    where $\Delta t = t_1 - t_0$.
\end{definition}

In Def. \ref{def:zoh-tlc}, we have also assumed that $\bm x(\xi) = \bm x(t_0)$. The error due to this assumption is small if $\Delta t$ is small. Again, this assumption falls into the ZOH assumption, and we can address this issue using sampled data methods \cite{breeden2021control} or event-triggered methods \cite{xiao2022event} \cite{taylor2020safety}, which will be further explored in the following subsections.

In the first-order case, the ZOH-TLC constraint (\ref{eqn:zoh-tlc}) becomes:
\begin{equation}
\label{eqn:1zoh-tlc}
     L_fh(\bm x(t_0)) + L_gh(\bm x(t_0))\bm u(t_0) + \frac{1}{\Delta t}h(\bm x(t_0))\geq 0, 
\end{equation}
which is equivalent to a first-order CBF with a linear class $\mathcal{K}$ function. {\it The Taylor-Lagrange theorem further shows that the coefficient of this class $\mathcal{K}$ function has to be $\frac{1}{\Delta t}$ in order for the existence of a CBF being a necessary and sufficient condition for the safety of the system.} 

In the extreme case as $\Delta t\rightarrow 0$ such that the ZOH approximation error can be eliminated, we see that the coefficient of the linear class $\mathcal{K}$ function goes to infinity in order for a CBF to be sufficient and necessary achieving safety. This is also consistent with the optimal control theory \cite{Bryson1969} that shows the safety constraint will not affect the optimal control solution until it becomes active (i.e., $h(\bm x) = 0$).

The more interesting analysis is in high-order cases. In the second order case, the ZOH-TLC constraint (\ref{eqn:zoh-tlc}) becomes:
\begin{equation} \begin{aligned}
      L_f^2h(\bm x(t_0)) + L_gL_fh(\bm x(t_0))\bm u(t_0) +  \frac{2}{{\Delta t}}L_fh(\bm x(t_0)) \\+ \frac{2}{{\Delta t}^2}h(\bm x(t_0)) \geq 0.
\end{aligned}
\end{equation}
Comparing the last equation with a second-order HOCBF $L_f^2h(\bm x(t_0)) + L_gL_fh(\bm x(t_0))\bm u(t_0) +  (p_1 + p_2)L_fh(\bm x(t_0)) + p_1p_2h(\bm x(t_0)) \geq 0.$ We have that:
\begin{equation}
    p_1 = \frac{1-1i}{\Delta t}, \quad p_2 = \frac{1+1i}{\Delta t},
\end{equation}
where $i$ denotes the imaginary axis, and the corresponding $p_1, p_2$ are two imaginary numbers in a HOCBF. This shows that the TLC extends the HOCBF to the imaginary value domain, and the corresponding safe set $C_2:=\{\bm x\in \mathbb{R}^n: L_h(\bm x) + p_1L_h(\bm x) \geq 0\}$ in a HOCBF is also defined in the complex domain. On the other hand, the TLC shows that a necessary and sufficient HOCBF for the safety of system (\ref{eqn:affine}) is defined in the complex domain.

\begin{remark} [Higher-Order ZOH-TLC]
    Given a safety constraint $h(\bm x)\geq 0$ that has relative degree $m$, we may also define a ZOH-TLC with relative degree $m+1$. Then the control $\bm u(t_0)$ would show up in $h^{(m)}(\bm x(t_0))$ in (\ref{eqn:tlc}), while $\bm u(\xi), \dot {\bm u}(\xi), \xi\in(t_0, t)$ would show up in $h^{(m+1)}(\bm x(\xi))$. Since we assume ZOH, we have that $\dot {\bm u}(\xi) = 0$. Since system (\ref{eqn:affine}) is Lipschitz and $\bm u$ is bounded, we can find the bound $M$ for $h^{(m+1)}(\bm x(\xi))$. This allows us to define robust TLC. We will further study this in future work.
\end{remark}

\subsection{Zero-Order Hold Taylor-Lagrange for Stability}

As in the ZOH-TLC case, we take $t_0\in [0, T]$ as the current implementation time instant, and $t = t_1$ as the next implementation time instant, and assume the discretization time interval $\Delta = t_1-t_0$ to be small enough. Then we have the ZOH assumption that $\bm u(t_0) = \bm u(\xi), \bm x(t_0) = \bm x(\xi), \xi\in (t_0, t_1)$ for a TLS as in Def. \ref{def:tls}.
Finally, we have the following ZOH Taylor-Lagrange Stability function definition:
\begin{definition}  [Zero-Order Hold Taylor-Lagrange Stability]\label{def:zoh-tls}
	 A continuously differentiable function $V: \mathbb{R}^n\rightarrow \mathbb{R}$ is a Zero-Order Hold Taylor-Lagrange Stability (ZOH-TLS) function of relative degree $m$ for system (\ref{eqn:affine}) if there exist constants $c_1 >0, c_2>0$  such that, for $\forall \bm x\in \mathbb{R}^n$,
	\begin{equation}\label{eqn:zoh-tls}
    \begin{aligned}
    &c_1||\bm x||^2 \leq V(\bm x) \leq c_2 ||\bm x||^2,\\
	&\inf_{\bm u(t_0)\in U}\left[\sum_{k=0}^{m-1} \frac{L_f^kV(\bm x(t_0))}{k!} {\Delta t}^k + \frac{L_f^mV(\bm x(t_0))}{m!} \Delta t^m\right. \\&\left.+ \frac{L_gL_f^{m-1}V(\bm x(t_0))\bm u(t_0)}{m!} \Delta t^m\right] \leq 0. 
    \end{aligned}
	\end{equation}
\end{definition}

In the first-order case, the ZOH-TLC constraint (\ref{eqn:zoh-tls}) becomes:
\begin{equation}
\label{eqn:1zoh-tls}
     L_fV(\bm x(t_0)) + L_gV(\bm x(t_0))\bm u(t_0) + \frac{1}{\Delta t}V(\bm x(t_0))\leq 0, 
\end{equation}
which is equivalent to a CLF in Def. \ref{def:clf} with an explicit coefficient $c_3 = \frac{1}{\Delta t}$ on the $V(\bm x)$ function. If the discretization time interval $\Delta t$ becomes smaller, then the system state will converge to the origin faster in a ZOH-TLS. The analysis of high-order cases is similar to that of ZOH-TLC.

\subsection{ ZOH-TLC-TLS based Quadratic Programs}

Eventually, we consider solving the Problem \ref{problem} using the proposed ZOH-TLC and ZOH-TLS. We use a ZOH-TLC to enforce the safety constraint (\ref{eqn:safety}), and use a ZOH-TLS $V(\bm x) = (\bm x - \bm x_d)^2$ to enforce the state convergence requirement (\ref{eqn:convergence}). The $\mathcal{C}(\cdot)$ function in (\ref{eqn:cost}) usually takes a quadratic form, and we can formluate the following ZOH-TLC-TLS based Quadratic Programs at the current time $t_0\in[0,T]$:
\begin{equation}\label{eqn:qp}
    \min_{\bm u(t_0),\delta(t_0)} ||\bm u(t_0)||^2 + w\delta(t_0)^2
\end{equation}
s.t.
$$
\begin{aligned}
\sum_{k=0}^{m-1} \frac{L_f^kh(\bm x(t_0))}{k!} {\Delta t}^k + \frac{L_f^mh(\bm x(t_0))}{m!} {\Delta t}^m \\+ \frac{L_gL_f^{m-1}h(\bm x(t_0))\bm u(t_0)}{m!} {\Delta t}^m\!\geq\! 0,
\end{aligned}
$$
$$
    \begin{aligned}
    \sum_{k=0}^{m-1} \frac{L_f^kV(\bm x(t_0))}{k!} {\Delta t}^k + \frac{L_f^mV(\bm x(t_0))}{m!} \Delta t^m \\+ \frac{L_gL_f^{m-1}V(\bm x(t_0))\bm u(t_0)}{m!} \Delta t^m \leq \delta(t_0), 
    \end{aligned}
$$
$$\bm u_{min}\leq \bm u(t_0)\leq \bm u_{max}.$$
where $\delta(t_0)$ is a slack variable to address potential conflict between the ZOH-TLC and ZOH-TLS constraints. At the current time $t_0\in [0,T]$, we take $\bm x(t)$ as a constant within the interval $[t_0, t_1]$, and the above problem becomes a Quadratic Program (QP). We solve the above QP and get an optimal control $\bm u^*(t_0)$ that is used to update the dynamics (\ref{eqn:affine}) for $t\in[t_0,t_1]$, and then set $t_0\leftarrow t_1$. We repeat the procedure until the final time $T$. 

The inter-sampling effect is important in the proposed TLC method as the corresponding TLC constraint (\ref{eqn:tlc}) is equivalent to the original safety constraint $h(\bm x)\geq 0$ by Thm. \ref{thm:ns}. This means that the TLC will only work when the system state is at the boundary of the unsafe set at which the inter-sampling effect is serious. In order to address this, we propose an event-triggered TLC method as shown in the next subsection.

\subsection{Event-triggered ZOH-TLC based QPs}

The event-triggered method for addressing the inter-sampling effect has been extensively studied in the literature \cite{xiao2022event} \cite{taylor2020safety}. We show here how to address the inter-sampling effect of the proposed TLC method.

We first define a state bound $S(\bm x(t_0))$ at the current event time $t_0$ in the form:
\begin{equation} \label{eqn:state_bnd}
    S(\bm x(t_0)) := \{\bm y\in \mathbb{R}^n: \bm x(t_0) - \bm x_{lower} \leq \bm y\leq \bm x(t_0) + \bm x_{up}\},
\end{equation}
where $\bm x_{lower}\in\mathbb{R}^n, \bm x_{up}\in\mathbb{R}^n$, and the above inequality is interpreted component-wise.

In order to find a robust TLC for $\bm x\in S(\bm x(t_0))$, we define the following:
\begin{equation}\label{eqn:r-b}
    h_{rtlc} = \min_{\bm x\in S(\bm x(t_0))} \sum_{k=0}^{m-1} \frac{L_f^kh(\bm x)}{k!} {\Delta t}^k + \frac{L_f^mh(\bm x)}{m!} {\Delta t}^m
\end{equation}

\begin{equation} \label{eqn:r-ctrl}
\begin{aligned}
G_{rtlc} &= (G_{trlc,1}, G_{trlc,2}, \dots, G_{trlc,q}), \\
G_{rtlc,k} &= \left\{\begin{array}{c} 
\min_{\bm x\in S(\bm x(t_0))}: \frac{[L_gL_f^{m-1}h(\bm x)]_k {\Delta t}^m}{m!}, \text{ if } u_k \geq 0, \\
\max_{\bm x\in S(\bm x(t_0))}: \frac{[L_gL_f^{m-1}h(\bm x)]_k {\Delta t}^m}{m!}, \text{ otherwise},
\end{array} \right.,
\end{aligned}
\end{equation}
where $k\in \{1,\dots, q\},$ $\bm u = (u_1, \dots, u_q)$, and $[L_gL_f^{m-1}h(\bm x)]_k$ denotes the $k_{th}$ component of the vector $L_gL_f^{m-1}h(\bm x)$.  The sign of $u_k$ can be determined by solving the QP (\ref{eqn:qp}) at the current time $t_0$.

We define the event-triggered TLC as follows:
\begin{definition} [Event-Triggered Taylor-Lagrange Control]\label{def:event-tlc}
    Let $S_{\bm x(t_0)}$ be defined by (\ref{eqn:state_bnd}) and $G_{rtlc}$ and $h_{rtlc}$ be defined by (\ref{eqn:r-b}), (\ref{eqn:r-ctrl}). The $r\in\mathbb{N}$ times differentiable function $h:\mathbb{R}^n\rightarrow \mathbb{R}$ is an event-triggered Taylor-Lagrange control  function of relative degree $m$ for system (\ref{eqn:affine}) if
    \begin{equation} \label{eqn:event-tlc}
       \begin{aligned}
     \sup_{\bm u(t_0)\in U}\left[G_{rtlc}\bm u(t_0) + h_{rtlc}\right] \geq 0,
\end{aligned}
    \end{equation} for all $\bm x(t_0)\in C$.
\end{definition}

Now we have the following new event-triggered QP:
\begin{equation} \label{eqn:event-qp}
    \min_{\bm u(t_0),\delta(t_0)} ||\bm u(t_0)||^2 + w\delta(t_0)^2
\end{equation}
s.t.
$$
\begin{aligned}
G_{rtlc}\bm u(t_0) + h_{rtlc}\geq 0,
\end{aligned}
$$
$$
    \begin{aligned}
    \sum_{k=0}^{m-1} \frac{L_f^kV(\bm x(t_0))}{k!} {\Delta t}^k + \frac{L_f^mV(\bm x(t_0))}{m!} \Delta t^m \\+ \frac{L_gL_f^{m-1}V(\bm x(t_0))\bm u(t_0)}{m!} \Delta t^m \leq \delta(t_0), 
    \end{aligned}
$$
$$\bm u_{min}\leq \bm u(t_0)\leq \bm u_{max}.$$
Note that we can define similar event-triggered TLS in the above QP. 

The next time $t_1$ to solve the QP is then determined by:
\begin{equation}\label{eqn:event_time}
    t_1 = \{t>t_0: \bm x(t) \notin S(\bm x(t_0))\},
\end{equation}
Note that $t_1 - t_0$ does not equal to the $\Delta t$ in (\ref{eqn:event-tlc}) as the $\Delta t$ is used to define the TLC while $t_1 - t_0$ is the event time interval. When the next event time $t_1$ is triggered, we take $t_0 \leftarrow t_1$, and resolve the QP (\ref{eqn:event-qp}).

We have the following theorem to show the safety of the event-triggered TLC based QP:
\begin{theorem}
    Given an event-triggered TLC defined as in Def. \ref{def:event-tlc}, if the next event time $t_1$ of solving the QP (\ref{eqn:event-qp}) is determined by (\ref{eqn:event_time}), then the set $C$ is forward invariant for system (\ref{eqn:affine}).
\end{theorem}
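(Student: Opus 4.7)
The plan is to combine the Taylor expansion argument used in the proof of Thm.~\ref{thm:tlc} with the robust bounding constructions in (\ref{eqn:r-b})--(\ref{eqn:r-ctrl}), and then chain the resulting inter-event safety guarantees by induction over the sequence of event times. I would work one inter-event interval at a time. Fix an event time $t_0$ with $\bm x(t_0)\in C$, and let $\bm u(t_0)$ be any feasible solution of the QP (\ref{eqn:event-qp}); then $G_{rtlc}\bm u(t_0)+h_{rtlc}\ge 0$. By the triggering rule (\ref{eqn:event_time}) and the ZOH assumption, for every $t\in[t_0,t_1]$ we have $\bm x(t)\in S(\bm x(t_0))$ and $\bm u(s)=\bm u(t_0)$ for $s\in[t_0,t_1]$.

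Next I would apply Taylor's theorem with Lagrange remainder (Thm.~\ref{thm:TL}) to $h(\bm x(\cdot))$ along the closed-loop trajectory, exactly as in the proof of Thm.~\ref{thm:tlce}, to obtain, for some $\xi\in(t_0,t)$,
\begin{equation*}
h(\bm x(t))=\sum_{k=0}^{m-1}\frac{L_f^kh(\bm x(t_0))}{k!}(t-t_0)^k+\frac{L_f^mh(\bm x(\xi))+L_gL_f^{m-1}h(\bm x(\xi))\bm u(t_0)}{m!}(t-t_0)^m.
\end{equation*}
Because $\xi\in(t_0,t)\subseteq[t_0,t_1]$, both $\bm x(t_0)$ and $\bm x(\xi)$ lie in $S(\bm x(t_0))$. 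I would then verify, directly from (\ref{eqn:r-b}), that $h_{rtlc}$ is a pointwise lower bound on the drift polynomial evaluated at any $\bm y\in S(\bm x(t_0))$, and, from the sign-dependent selection in (\ref{eqn:r-ctrl}), that $G_{rtlc,k}u_k(t_0)\le \frac{[L_gL_f^{m-1}h(\bm y)]_k u_k(t_0)\Delta t^m}{m!}$ componentwise, hence $G_{rtlc}\bm u(t_0)$ lower-bounds the control-dependent term uniformly over $\bm y\in S(\bm x(t_0))$.

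The delicate step, and the one I expect to be the main obstacle, is to bridge between the fixed horizon $\Delta t$ built into $h_{rtlc}, G_{rtlc}$ and the variable elapsed time $(t-t_0)$ that appears in the Lagrange expansion, as well as the fact that the drift terms in the true expansion are anchored at $\bm x(t_0)$ while the remainder sits at $\bm x(\xi)\ne \bm x(t_0)$ in general. I would handle this by taking $\Delta t$ to be an upper bound on the admissible inter-event interval (adding a forced trigger at $t_0+\Delta t$ if the state has not yet exited $S(\bm x(t_0))$), and then arguing monotonicity of each nonnegative polynomial power in $(t-t_0)\in[0,\Delta t]$ together with the uniformity of the min/max bounds over $S(\bm x(t_0))$; the pointwise lower bounds on the drift and on the control-dependent term then combine to give $h(\bm x(t))\ge 0$ for all $t\in[t_0,t_1]$.

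Finally I would close the argument by induction: since $h(\bm x(t_1))\ge 0$, i.e.\ $\bm x(t_1)\in C$, the same reasoning applies at the next event time with $t_0\leftarrow t_1$, and forward invariance of $C$ for system (\ref{eqn:affine}) follows by concatenating the inter-event guarantees on $[0,t_1]\cup[t_1,t_2]\cup\cdots$, exhausting $[0,\infty)$ because the QP (\ref{eqn:event-qp}) is resolved at each event.
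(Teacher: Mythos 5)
Your overall skeleton (QP feasibility gives $G_{rtlc}\bm u(t_0)+h_{rtlc}\ge 0$, the trigger rule (\ref{eqn:event_time}) keeps $\bm x(t)\in S(\bm x(t_0))$ on $[t_0,t_1]$, robust min/max bounds transfer the inequality to the true trajectory, then conclude via the Taylor--Lagrange machinery and induct over events) matches the paper's proof. But the way you deploy the robust bounds has a concrete gap. You expand $h(\bm x(t))$ from the anchor $t_0$, so your drift terms of order $k\le m-1$ are evaluated at $\bm x(t_0)$ while the order-$m$ remainder sits at $\bm x(\xi)$ --- two \emph{different} points of $S(\bm x(t_0))$. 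The quantity $h_{rtlc}$ in (\ref{eqn:r-b}) is the minimum of the \emph{entire sum} over a \emph{single} point $\bm x\in S(\bm x(t_0))$, and a single-point minimum does not lower-bound a mixed-point evaluation: $\min_{x}(f(x)+g(x))$ can exceed $f(x_1)+g(x_2)$ (take $f(x)=x$, $g(x)=-x$ on $[0,1]$, $x_1=0$, $x_2=1$). The paper avoids this entirely by keeping every Lie derivative at the same point $\bm x(t)$, observing that the resulting single-point expression is lower-bounded by $G_{rtlc}\bm u(t_0)+h_{rtlc}\ge 0$ for every $t\in[t_0,t_1]$, and then invoking Thm.~\ref{thm:tlc} with \emph{each} such $t$ re-used as the initial time; it never needs to bound your mixed-anchor expansion.

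Your proposed repair of the $\Delta t$ versus $(t-t_0)$ mismatch also does not go through as stated. For $m\ge 2$ the Taylor polynomial $\sum_k c_k s^k$ has coefficients of mixed sign, so nonnegativity at $s=\Delta t$ together with $c_0\ge 0$ does not imply nonnegativity on all of $[0,\Delta t]$ (e.g.\ $1-3s+2.2s^2$ is nonnegative at $s=0$ and $s=1$ but negative near $s=0.68$); ``monotonicity of each nonnegative polynomial power'' only controls the powers $s^k$, not the signed terms $c_k s^k$. Moreover, inserting a forced trigger at $t_0+\Delta t$ changes the triggering law from (\ref{eqn:event_time}), so you would be proving a modified theorem. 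To land on the paper's argument, drop the $t_0$-anchored expansion: show the displayed inequality with all terms at $\bm x(t)$ holds for every $t\in[t_0,t_1]$ using (\ref{eqn:r-b})--(\ref{eqn:r-ctrl}) and $\bm x(t)\in S(\bm x(t_0))$, and conclude by applying Thm.~\ref{thm:tlc} with each $t$ as a fresh initial time; your concluding induction over event times is then exactly what the paper implicitly does.
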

\begin{proof}
By (\ref{eqn:r-b}) and (\ref{eqn:r-ctrl}), we have that
$$
\begin{aligned}
\sum_{k=0}^{m-1} \frac{L_f^kh(\bm x(t))}{k!} {\Delta t}^k + \frac{L_f^mh(\bm x(t))}{m!} {\Delta t}^m \\+ \frac{L_gL_f^{m-1}h(\bm x(t))\bm u(t_0)}{m!} {\Delta t}^m\geq \\G_{rtlc}\bm u(t_0) + h_{rtlc}, \forall t\in [t_0, t_1]
\end{aligned}
$$
where $t_1$ is the next event time defined by (\ref{eqn:event_time}) and $\bm x(t)\in S(\bm x(t_0)), \forall t\in [t_0, t_1]$.

As the control $\bm u(t_0)$ is obtained by solving the QP (\ref{eqn:event-qp}), we have that 
$$
G_{rtlc}\bm u(t_0) + h_{rtlc} \geq 0,
$$

Combining the last two equations, we have that 
$$
\begin{aligned}
\sum_{k=0}^{m-1} \frac{L_f^kh(\bm x(t))}{k!} {\Delta t}^k + \frac{L_f^mh(\bm x(t))}{m!} {\Delta t}^m \\+ \frac{L_gL_f^{m-1}h(\bm x(t))\bm u(t)}{m!} {\Delta t}^m\geq 0,
\end{aligned}
$$
$\forall t\in [t_0, t_1]$. By Thm. \ref{thm:tlc}, we have that $h(\bm x(t))\geq 0, \forall t\in[t_0, t_1]$ when we take every $t$ as the initial time in Thm. \ref{thm:tlc}. Thus, the set $C$ is forward invariant for system (\ref{eqn:affine}).

\end{proof}

\section{CASE STUDIES AND RESULTS}
\label{sec:case}
In this section, we present case studies for the Adaptive Cruise Control (ACC) and a 2D robot control problem. All the computations were conducted in MATLAB, and we used quadprog to solve QPs and used ode45 to integrate the dynamics.

\subsection{ACC Problem}

Consider the adaptive cruise control (ACC) problem with the vehicle dynamics in the form:
\begin{equation} \label{eqn:simpledynamics}
\left[\begin{array}{c} 
\dot v(t)\\
\dot z(t)
\end{array} \right]=
\left[\begin{array}{c}  
\frac{-F_r(v(t))}{M}\\
v_0 - v(t)
\end{array} \right] + 
\left[\begin{array}{c}  
\frac{1}{M}\\
0
\end{array} \right]u(t),
\end{equation}
where $v(t)$ denotes the velocity of the ego vehicle along its lane, $z(t)$ denotes the along-lane distance between the ego and its preceding vehicles, $v_0 > 0$ denotes the speed of the preceding vehicle, $M$ is the mass of the ego vehicle, and $u(t)$ is the control input of the controlled vehicle. $F_{r}(v(t))$ denotes the resistance force, which is expressed \cite{Khalil2002} as:
\begin{equation}\label{eqn:resistence}
F_{r}(v(t)) = f_0sgn(v(t)) + f_1v(t) + f_2 v^2(t),
\end{equation}
where $f_0 > 0, f_1 > 0$ and $f_2 > 0$ are scalars determined empirically. The first term in $F_{r}(v(t))$ denotes the coulomb friction force, the second term denotes the viscous friction force and the last term denotes the aerodynamic drag.

\textbf{Objective 1.} We wish to minimize the acceleration of the ego vehicle in the form:
\begin{equation}
    \min_u \int_0^T \left(\frac{u-F_r(v)}{M}\right)^2 dt.
\end{equation}

\textbf{Objective 2.} The ego vehicle seeks to achieve a desired speed $v_d > 0$.

\textbf{Safety.} We require that the distance $z(t)$ between the controlled vehicle and its immediately preceding vehicle be greater than a constant $\delta > 0$ for all the times, i.e.,
\begin{equation} \label{eqn:safety_acc}
z(t) \geq c,\quad\forall t\geq 0,
\end{equation}
where $c > 0$.

\textbf{Control limitation.} We also consider a control constraint $-c_dMg\leq u(t)\leq c_aMg, g = 9.81m/s^2, c_a > 0, c_d > 0$ for (\ref{eqn:simpledynamics}). 

The ACC problem is to find a control policy that achieves Objectives 1 and 2 subject to the safety constraint and control bound.
The relative degree of (\ref{eqn:safety_acc}) is two, and we use either a second order HOCBF, TLC, and event-triggered TLC to implement it by defining $h(\bm x) = z - c \geq 0$ and:
\begin{equation} \label{eqn:hocbf-case}
\begin{aligned}
    L_f^2h(\bm x) + L_gL_fh(\bm x)u + (p_1+p_2)L_fh(\bm x) \\+ p_1p_2h(x)\geq 0. 
\end{aligned}
\quad\text{HOCBF}
\end{equation}

\begin{equation} \label{eqn:tlc-case}\begin{aligned}
      L_f^2h(\bm x) + L_gL_fh(\bm x)u +  \frac{2}{{\Delta t}}L_fh(\bm x) \\+ \frac{2}{{\Delta t}^2}h(\bm x) \geq 0.
\end{aligned}
\qquad\text{TLC}
\end{equation}
The state bound for the event-triggered TLC is defined as 
$$S(\bm x) = \{\bm y\in \mathbb{R}^2: \bm x - \bm x_{lower}\leq \bm y\leq  \bm x + \bm x_{up}\}.$$
We employ a TLS (similar to a CLF) with relative degree one to enforce the desired speed. All the simulation parameters are shown in Table \ref{table:param}, in which $d_t$ denotes the sensor monitoring time gap for the event-trigger method.

\begin{table}
 	\caption{Simulation parameters for ACC}\label{table:param}
 	%	\label{table_example}
 	\begin{center}
 		\begin{tabular}{|c||c||c|c||c||c|}
 			\hline
Parameter & Value & Units &Parameter & Value & Units\\
\hline
\hline
$v(0)$ & 24& $m/s$&	$z(0)$ & 90& $m$\\
\hline
$v_{0}$ & 13.89& $m/s$ & $v_d$ & 24& $m/s$\\
\hline
$M$ & 1650& $kg$ &g & 9.81& $m/s^2$\\
\hline
$f_0$ & 0.1& $N$ &$f_1$ & 5& $Ns/m$\\
\hline
$f_2$ & 0.25& $Ns^2/m$ &$c$ & 10& $m$\\
\hline
$\Delta t$ & 0.1& $s$&	$d_t$ & 0.03& $s$\\
\hline
$c_a$ & 0.4& unitless&$c_d$ & 0.7& unitless\\
\hline
$\bm x_{lower}$ & (0.5, 1)& unitless&$\bm x_{up}$ & (0.5,1)& unitless\\
\hline
 		\end{tabular}
 	\end{center}
 	
 	%  \qquad  \quad $*$ m is a compulsory conversion from m/s.  	
 \end{table}

 We first present a comparison between the time-driven HOCBF and TLC when $\Delta t = 1s$, as shown in Fig. \ref{fig:time-compare}. The TLC method generates a larger deceleration than the HOCBF when the safety constraint is about to be active, and both are subject to the inter-sampling effect as the $\Delta t$ is too large, and thus the safety constraint is slightly violated.

 \begin{figure}[thpb]
	\vspace{-1mm}
	\centering
	\includegraphics[scale=0.5]{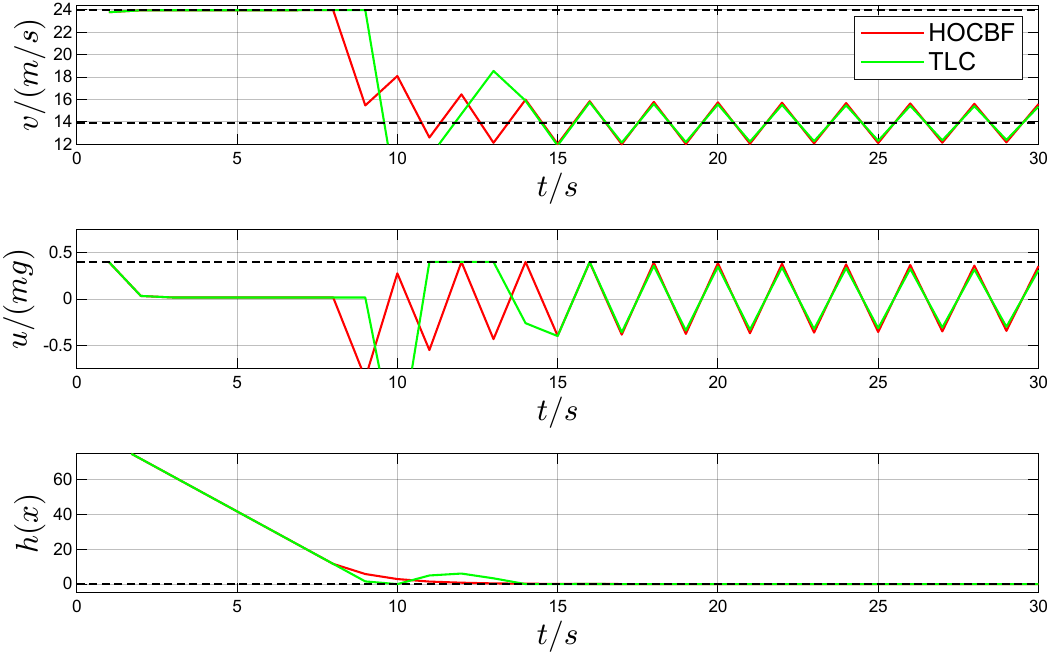}
	\vspace{-4mm}
	\caption{Comparison of State, control and safety function $h(\bm x)$ between time-driven HOCBF and TLC when $\Delta t = 1s$.}
	\label{fig:time-compare}	
	\vspace{-3mm}
\end{figure}

Then we present the comparison when the implementation time becomes $0.1s$ as shown in Fig. \ref{fig:event-compare}. The time-driven TLC still slightly violate the safety constraint due to the inter-sampling issue, while the event-triggered TLC can strictly enforce the safety. The event is more frequently triggered when the safety constraint is close to be active, as shown in the last frame of Fig. \ref{fig:event-compare}. We further show in Fig. \ref{fig:sets} how the corresponding $\psi_1(\bm x) = L_fh(\bm x) + \frac{1-1i}{\Delta t}h(\bm x)$ varies in the complex plane for the TLC and event-triggered TLC when compared with the HOCBF. The HOCBF only evolves on the real axis, while both the TLC and event-triggered TLC moves in the real-imaginary complex plane. This shows that the TLC extends the HOCBF to the imaginary value domain. Moreover, the TLC is less restrictive than the HOCBF as it only requires the forward invariance of $C_1$ while the HOCBF requires the forward invariance of $C_1\cap C_2$.

\begin{figure}[thpb]
	\vspace{-1mm}
	\centering
	\includegraphics[scale=0.5]{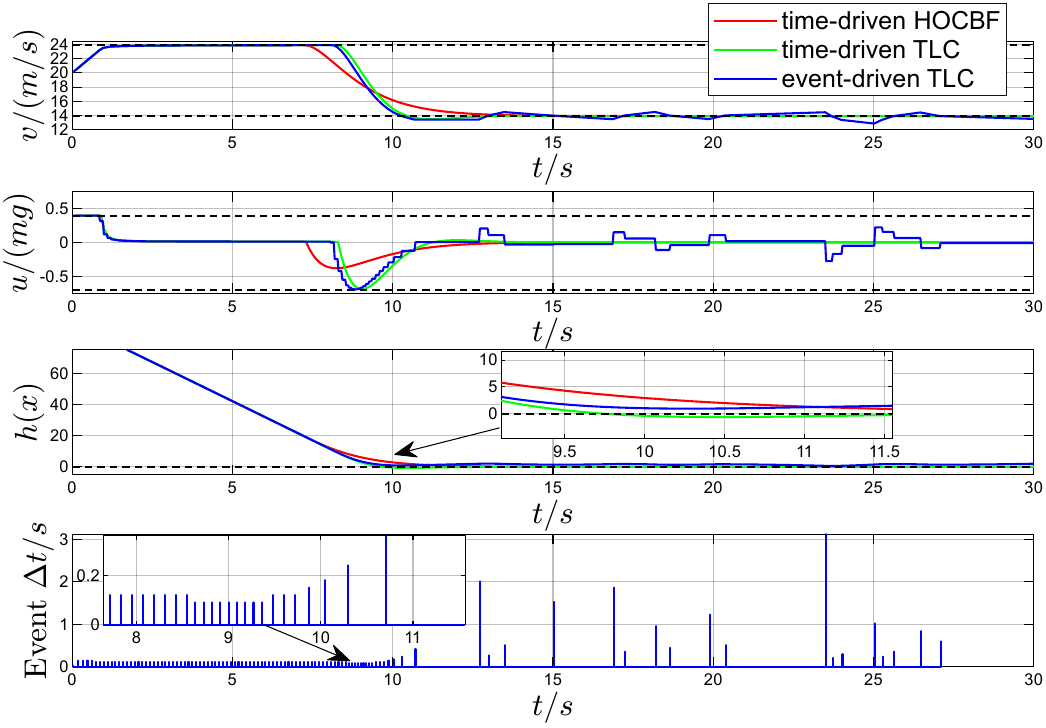}
	\vspace{-4mm}
	\caption{Comparison of State, control, safety function $h(\bm x)$, and event-triggering time $\Delta t$ between time-driven HOCBF and time-driven TLC, and event-triggered TLC when the implementation time is $0.1s$.}
	\label{fig:event-compare}	
	% \vspace{-3mm}
\end{figure}

\begin{figure}[thpb]
	% \vspace{-4mm}
	\centering
	\includegraphics[scale=0.7]{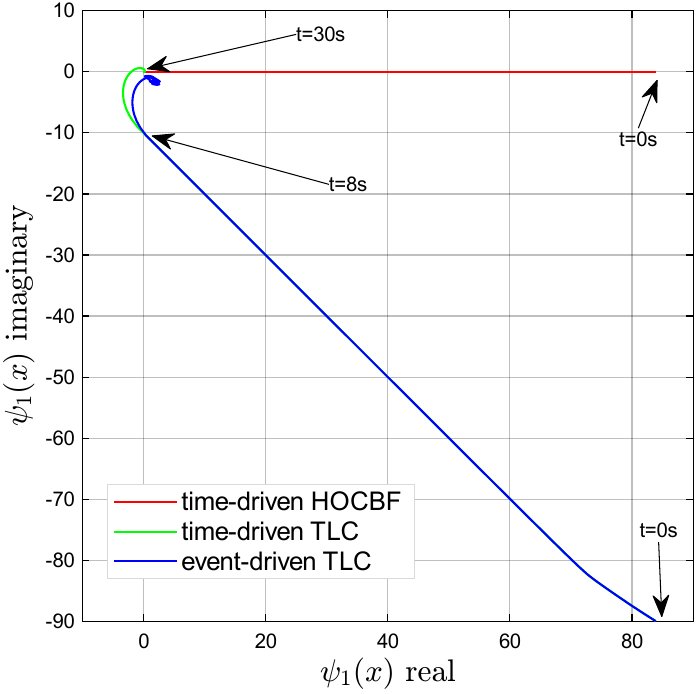}
	% \vspace{-4mm}
	\caption{The variations of the corresponding $\psi_1(\bm x) = L_fh(\bm x) + \frac{1-1i}{\Delta t}h(\bm x)$ for TLC and event-triggered TLC when compared with the HOCBF in the complex plane.}	
	\label{fig:sets}
	% \vspace{-4mm}
\end{figure}

\subsection{Robot Control Problem}

\textbf{Robot Dynamics:} We consider the following unicycle model for a wheeled mobile robot:
\begin{equation}\label{eqn:robot}
\begin{aligned}
\dot x = v\cos\theta,\quad
\dot y = v\sin\theta,\quad
\dot v = u_2, \quad\dot\theta = u_1,
\end{aligned}
\end{equation}
where $\bm x := (x,y,\theta,v), \bm u = (u_1,u_2)$, $x, y$ denote the location along $x, y$ axis, respectively, $\theta$ denotes the heading angle of the robot, $v$ denotes the linear speed, and $u_1, u_2$ denote the two control inputs for turning and acceleration, respectively. 

\textbf{Objective 1} (Minimum Energy Consumption) considers the cost function in the form
$$
J(\bm u(t)) = \int_{t_0}^{t_f} (u_1^2(t) + u_2^2(t)) dt.
$$

\textbf{Objective 2} (Destination) wishes the robot to arrive at a destination $(x_d, y_d)\in\mathbb{R}^2$, during time interval $[t_1,t_2], 0\leq t_1\leq t_2\leq T$.

\textbf{Constraint 1} (Safety constraint) requires the robot to satisfy the safety constraint $
(x(t) - x_o)^2 + (y(t) - y_o)^2 \geq r^2,
$
where $(x_o, y_o)\in\mathbb{R}^2$ denotes the location of the obstacle, $r = 7m$ is determined by the size of the obstacle (usually a little bigger than the size of the obstacle $r = 6m$).

\textbf{Constraint 2} (Robot Limitations) are the speed and control limitations:
$
v_{min} \leq v(t) \leq v_{max},
u_{1,min} \leq u_1(t) \leq u_{1,max},u_{2,min} \leq u_2(t) \leq u_{2,max},
$
where $v_{min}= 0m/s, v_{max}= 2m/s$, $u_{1,max}= -u_{1,min} = 0.4rad/s$, and $u_{2,max} = -u_{2,min} = 0.8m/s^2$.

We wish to determine control laws to achieve Objectives 1, 2 subject to Constraints 1, 2 (both are strictly satisfied), for the robot governed by dynamics (\ref{eqn:robot}).

We employ either HOCBFs, TLCs or event-triggered TLCs to impose Constraints 1 and 2 on control input and a TLS to achieve Objective 2. We capture Objective 1 in the cost of the optimization problem. 
Specifically, for the Constraint 1, we define $h(\bm x) = (x - x_o)^2 + (y - y_o)^2 - r^2$. The corresponding HOCBF and TLC constraints are the same as in (\ref{eqn:hocbf-case}) and (\ref{eqn:tlc-case}). For the speed limit in constraint 2, we employ a TLC with relative degree one to enforce it.

For Objective 2, we employ either CLFs or TLS to enforce it. Specifically, in the CLF method, we define two CLFs $V_1(\bm x) = (v - v_d)^2$ and $V_2(\bm x) = (\theta - arctan(\frac{y-y_d}{x-x_d}))$, where $v_d$ is a desired speed dependent on the distance to the target. In the TLS method, we define a TLS $V(\bm x) = (x-x_d)^2 + (y - y_d)^2$, and it has relative degree 2 with repect to the dynamics (\ref{eqn:robot}). The corresponding TLS constraint is then given by
\begin{equation} \begin{aligned}
      L_f^2V(\bm x) + L_gL_fV(\bm x)u +  \frac{2}{{\Delta t}}L_fV(\bm x) \\+ \frac{2}{{\Delta t}^2}V(\bm x) \leq 0.
\end{aligned}
\qquad\text{TLS}
\end{equation}

The state bound in the event-triggered TLC is defined as:
$$S(\bm x) = \{\bm y\in \mathbb{R}^4: \bm x - \bm x_{lower}\leq \bm y\leq  \bm x + \bm x_{up}\}.$$
where $\bm x_{lower} = \bm x_{up} = (0.2, 0.2, 0.1, 0.1)$.

The simulated trajectories are shown in Fig. \ref{fig:traj}. The trajectory of the TLC method slightly enters the unsafe region due to the inter-sampling effect, while the trajectories from both the HOCBF and event-triggered TLC methods can ensure collision-free with respect to the obstacle.

 \begin{figure}[thpb]
	% \vspace{-4mm}
	\centering
	\includegraphics[scale=0.5]{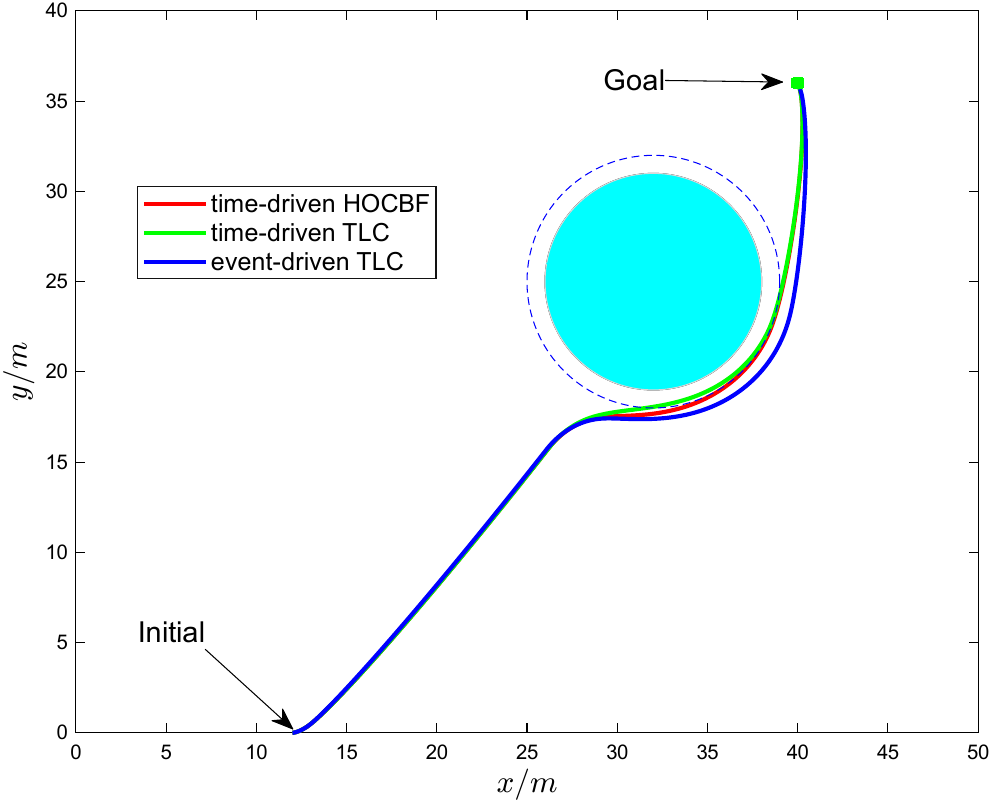}
	% \vspace{-4mm}
	\caption{The simulated trajectories of the robot under the HOCBF, TLC, and event-triggered TLC methods.}	
	\label{fig:traj}
	% \vspace{-4mm}
\end{figure}

The state, control and event time profiles are shown in Fig. \ref{fig:state}. They present similar patterns in terms of the controls and safety function $h(\bm x)$. As shown in the third frame of Fig. \ref{fig:state}, the $h(\bm x)$ of the TLC slightly goes below 0 due to the inter-sampling effect. Compared to the HOCBF method, the event-triggered TLC method can save computation resource as the number of QPs is much smaller. We further show the trajectories of $\psi_1(\bm x) = L_fh(\bm x) + \frac{1-1i}{\Delta t}h(\bm x)$ in TLC and event-triggered TLC when compared with the HOCBF method in Fig. \ref{fig:sets-robot}. This further validates that the TLC method extends the HOCBF method to the imaginary value domain.

\begin{figure}[thpb]
	% \vspace{-4mm}
	\centering
	\includegraphics[scale=0.5]{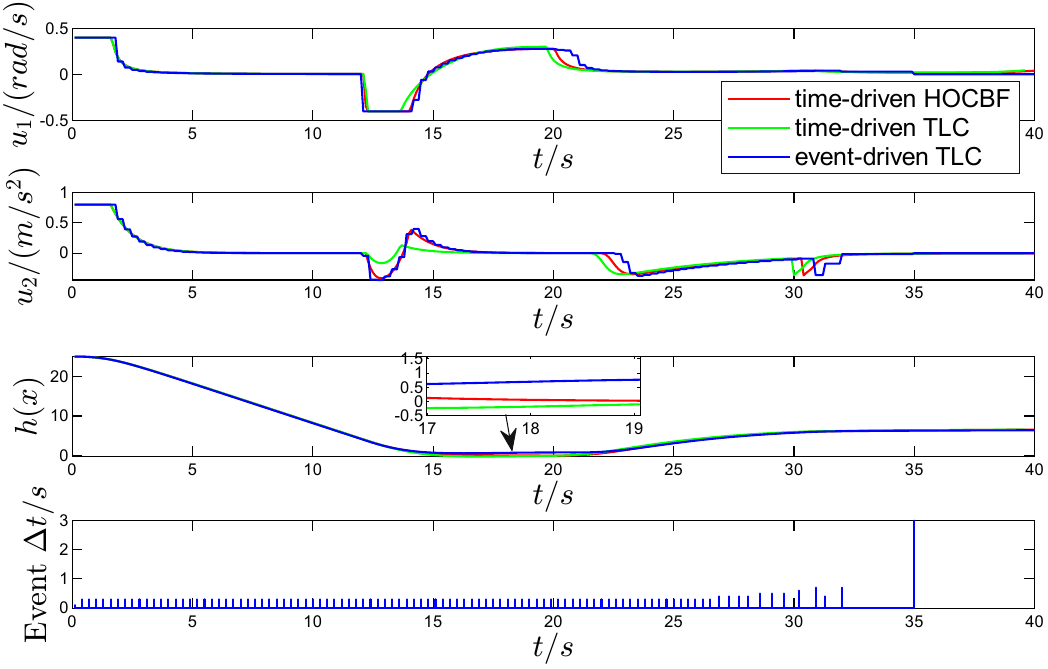}
	% \vspace{-4mm}
	\caption{Comparison of State, control, safety function $h(\bm x)$, and event-triggering time $\Delta t$ between time-driven HOCBF and time-driven TLC, and event-triggered TLC in the robot control problem.}	
	\label{fig:state}
	% \vspace{-4mm}
\end{figure}

\begin{figure}[thpb]
	% \vspace{-4mm}
	\centering
	\includegraphics[scale=0.7]{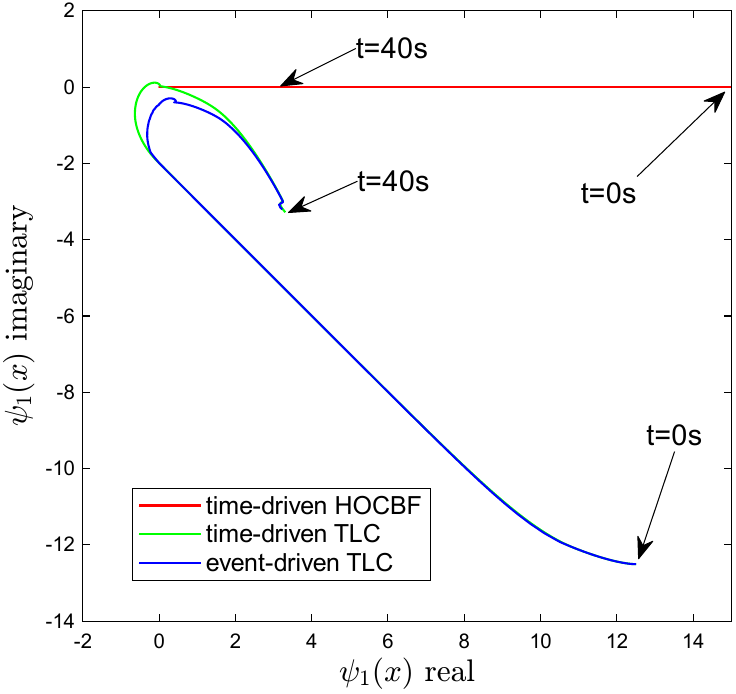}
	% \vspace{-4mm}
	\caption{The variations of the corresponding $\psi_1(\bm x) = L_fh(\bm x) + \frac{1-1i}{\Delta t}h(\bm x)$ for TLC and event-triggered TLC when compared with the HOCBF in the complex plane.}	
	\label{fig:sets-robot}
	% \vspace{-4mm}
\end{figure}

\section{Conclusion}
\label{sec:conclusion}

We proposed a new Taylor-Lagrange control method for ensuring the safety and stability of nonlinear control systems using Taylor's theorem with Lagrange remainder. The proposed Taylor-Lagrange control method pushes the boundary of existing control barrier function methods, and its existence is a necessary and sufficient condition fo system safety. We also proposed an event-triggered Taylor-Lagrange control method to address the inter-sampling issue in optimal control problems.  We applied the proposed method to an adaptive cruise control problem and to a 2D robot navigation task. 
In the future, we plan to study the receding horizon control of the Taylor-Lagrange control method as its current form allows the existence of controls in future time. We will also investigate its robustness and adaptivity for uncertain systems, as well as study higher-order Taylor-Lagrange control to address the inter-sampling effect.

% if have a single appendix:
%\appendix[Proof of the Zonklar Equations]
% or
%\appendix  % for no appendix heading
% do not use \section anymore after \appendix, only \section*
% is possibly needed

% use appendices with more than one appendix
% then use \section to start each appendix
% you must declare a \section before using any
% \subsection or using \label (\appendices by itself
% starts a section numbered zero.)
%

%\appendices
%\section{Proof of the First Zonklar Equation}
%Appendix one text goes here.

% you can choose not to have a title for an appendix
% if you want by leaving the argument blank
%\section{}
%Appendix two text goes here.

% use section* for acknowledgment
%\section*{Acknowledgment}

%The authors would like to thank...

% Can use something like this to put references on a page
% by themselves when using endfloat and the captionsoff option.
\ifCLASSOPTIONcaptionsoff
  \newpage
\fi

\clearpage

% trigger a \newpage just before the given reference
% number - used to balance the columns on the last page
% adjust value as needed - may need to be readjusted if
% the document is modified later
%\IEEEtriggeratref{8}
% The "triggered" command can be changed if desired:
%\IEEEtriggercmd{\enlargethispage{-5in}}

% references section

% can use a bibliography generated by BibTeX as a .bbl file
% BibTeX documentation can be easily obtained at:
% http://mirror.ctan.org/biblio/bibtex/contrib/doc/
% The IEEEtran BibTeX style support page is at:
% http://www.michaelshell.org/tex/ieeetran/bibtex/
%\bibliographystyle{IEEEtran}
% argument is your BibTeX string definitions and bibliography database(s)
%\bibliography{IEEEabrv,../bib/paper}
%
% <OR> manually copy in the resultant .bbl file
% set second argument of \begin to the number of references
% (used to reserve space for the reference number labels box)

%\begin{thebibliography}{1}

%\bibitem{IEEEhowto:kopka}
%H.~Kopka and P.~W. Daly, \emph{A Guide to \LaTeX}, 3rd~ed.\hskip 1em plus
%  0.5em minus 0.4em\relax Harlow, England: Addison-Wesley, 1999.

%\end{thebibliography}

\bibliographystyle{IEEEtran}
\bibliography{HOCBF}

% biography section
% 
% If you have an EPS/PDF photo (graphicx package needed) extra braces are
% needed around the contents of the optional argument to biography to prevent
% the LaTeX parser from getting confused when it sees the complicated
% \includegraphics command within an optional argument. (You could create
% your own custom macro containing the \includegraphics command to make things
% simpler here.)
%\begin{IEEEbiography}[{\includegraphics[width=1in,height=1.25in,clip,keepaspectratio]{mshell}}]{Michael Shell}
% or if you just want to reserve a space for a photo:

\begin{IEEEbiography}[{\includegraphics[width=1in,height=1.25in,clip,keepaspectratio]{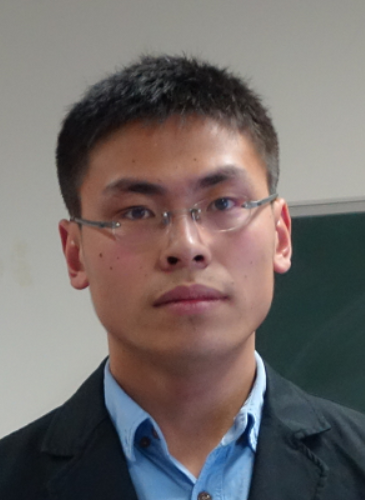}}]{Wei Xiao}
is currently an assistant professor at the robotics engineering department, Worcester Polytechnic Institute. He was a postdoctoral associate at Massachusetts Institute of Technology, and received a B.Sc. degree from the University of Science and Technology Beijing, China in 2013, a M.Sc. degree from the Chinese Academy of Sciences (Institute of Automation), China in 2016, and a Ph.D. degree from the Boston University, Brookline, MA, USA in 2021.
His research interests include control theory and machine learning, with particular emphasis on robotics and multi-agent control. He received an Outstanding Student Paper Award at the 2020 IEEE Conference on Decision and Control.
\end{IEEEbiography}

% if you will not have a photo at all:
\begin{IEEEbiography}[{\includegraphics[width=1in,height=1.25in,clip,keepaspectratio]{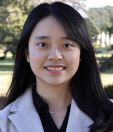}}]{Anni Li} is an assistant professor at the 
Electrical and Computer Engineering Department, The University of North Carolina at Charlotte. She received a B.Sc. degree in Mathematics and a minor in Physics from Central China Normal University, Wuhan, China, and an M.Sc. degree in operational research and cybernetics from Tongji University, Shanghai, China, and a Ph.D. degree in Systems Engineering at Boston University, Brookline, MA, USA in 2017, 2020 and 2025, respectively. Her research focuses on autonomous vehicles in transportation systems, with emphasis on safe and optimal cooperation and methods for cooperative compliance for social optimality.
\end{IEEEbiography}

% insert where needed to balance the two columns on the last page with
% biographies
%\newpage

% You can push biographies down or up by placing
% a \vfill before or after them. The appropriate
% use of \vfill depends on what kind of text is
% on the last page and whether or not the columns
% are being equalized.

%\vfill

% Can be used to pull up biographies so that the bottom of the last one
% is flush with the other column.
%\enlargethispage{-5in}

% that's all folks
\end{document}